\newtheorem{thm}{Theorem}[section]
\newtheorem{Theorem}{Theorem}[section]
\newtheorem{Lemma}[thm]{Lemma}
\newtheorem{prop}[thm]{Proposition}
\newtheorem{Proposition}[thm]{Proposition}
\newtheorem{cor}[thm]{Corollary}
\theoremstyle{remark}
\theoremstyle{remark}
\newcommand{\Z}{\mathbb Z}
\newcommand{\R}{\mathbb R}
\newcommand{\e}{\varepsilon}
\newcommand{\oh}{\tfrac12}
\newcommand{\bz}{{\mathbb Z}}
\newcommand{\br}{{\mathbb R}}
\begin{document}

\title[Piecewise periodic]{Random skew plane partitions with a piecewise periodic back wall}

\author[C. Boutillier]{Cedric Boutillier}
\address{Cedric Boutillier\newline UPMC Univ Paris 06\newline UMR 7599, LPMA\newline F-75005, Paris, France 
\newline\&\newline
D\'{e}partement de Math\'{e}matiques et Applications (DMA),
\newline \'{E}cole Normale Sup\'{e}rieure
\newline 75230 Paris, FRANCE
}
\email{cedric.boutillier@upmc.fr}

\author[S. Mkrtchyan]{Sevak Mkrtchyan}
\address{Sevak Mkrtchyan \newline Rice University, Math Department - MS136\newline 6100 S. Main St,\newline Houston, TX, 77005}
\email{sevak.mkrtchyan@rice.edu}

\author[N. Reshetikhin]{Nicolai Reshetikhin}
\address{Nicolai Reshetikhin \newline UC Berkeley, Department of Mathematics\newline 970 Evans Hall 3840,\newline Berkeley, CA 94720}
\email{reshetik@math.berkeley.edu}

\author[P. Tingley]{Peter Tingley}
\address{Peter Tingley \newline MIT dept. of math\newline 77 Massachusetts Ave\newline Cambridge, MA, 02139}
\email{ptingley@math.mit.edu}

\thanks{The third and fourth authors were partially supported by the NSF grant DMS-0354321.}

\begin{abstract}

Random skew plane partitions of large size distributed according to an appropriately scaled 
Schur process develop limit shapes. In the present work we consider the limit of large random skew plane partitions where the inner boundary approaches a piecewise linear curve 
with non-lattice slopes, describing the limit shape and the local fluctuations in various regions. 
This analysis is fairly similar to that in \cite{OR2}, but we do find some new behavior. For instance,
the boundary of the limit shape 
is now a single
smooth (not algebraic) curve, whereas the boundary in \cite{OR2} is singular. 
We also observe the bead process introduced in \cite{beads} appearing in the 
asymptotics at the top of the
limit shape.
\end{abstract}

\maketitle

\tableofcontents

\section{Introduction}
\label{sec:intro}

A \emph{skew plane partition} with inner boundary being given by a partition $\lambda$ is any array of positive integers $\pi= \{\pi_{i,j} \}$ defined for pairs $(i,j)$ such that $i\geq 1$ and $j\geq \lambda_i$,
which is non-increasing in both $i$ and $j$.
A skew plane partition is called {\it bounded} by $N$ and $M$ if $\pi_{ij}=0$ when $i>N$ or $j>M$.
An example of a skew plane partition with inner shape $\lambda=(3,2,2)$ is given on Figure \ref{fig:skew_partition}. Looking at this picture we can also see that a bounded skew plane partition can be regarded as monotonic piles of cubes in a certain ``semi-infinite room'' with no roof (see also Figure \ref{fig:empty}).

A natural question to ask is: what is the shape of a typical pile containing a
large number of cubes? One must first make this precise by fixing a probability distribution. 
One natural distribution to choose would be the uniform distribution on all 
skew plane partitions with a fixed number
of cubes.
This is actually quite difficult to deal with, and we instead
consider the following probability measure (\emph{grand-canonical ensemble}) on bounded skew plane partitions:
\begin{equation}\label{eq:distr}
P(\pi)=\frac{1}{Z} q^{|\pi|},
\end{equation}
where $|\pi|=\sum_{i,j}\pi_{i,j}$ is the total number of boxes in the piles corresponding to the plane partition $\pi$, $0<q<1$ and
\[
Z=\sum_{\pi} q^{|\pi|}
\]
is the normalization factor known as the \emph{partition function} in
statistical mechanics.
The larger $q$ is, the more likely it is to have many
cubes. We will study certain limits of this system as $q$ approaches $1$, the bounds $M$ and $N$ approach infinity, and the partition $\lambda$ defining the back
wall of the room grows in a specified way.

There are many ways to take such a limit. For instance, one can specify that, after an appropriate rescaling, 
$\lambda$
approaches a fixed curve. The limiting behavior of the system will certainly depend on this curve. Some such cases have previously been studied:
\begin{itemize}
  \item The case when $\lambda$ is empty was studied in the unbounded case in \cite{CerfKenyon} and in the bounded case in
    \cite{OR1}.
  \item The case when $\lambda$ approaches a piecewise linear curve with
    lattice slopes was studied in \cite{OR2}.
\end{itemize}
In this article we consider a limit where $\lambda$ approaches a piecewise
linear function with a (non-lattice) slope staying \emph{strictly} between $-1$ and $1$. The
large size limit is studied via \emph{correlation functions} using the
formalism and notations developed in \cite{OR1,OR2}. 

Before we start, let us mention some other related work. Using results of this article, the limit where $\lambda$ approaches a piecewise linear function with arbitrary slopes has been studied in \cite{M}. Local fluctuations and the free energy for the partition function on a torus were first computed in \cite{NHB}.

\subsection{Random skew plane partitions and Correlation functions}

Our piles of unit cubes are in $\R^3$ with coordinates $(x,y,z)$
centered at points $(\Z+\frac{1}{2},\Z+\frac{1}{2},\Z+\frac{1}{2})$ and bounded
in $(x,y)$ directions as it was described above. That is, the $(x,y)$ coordinates of centers of cubes satisfy conditions:
\begin{equation*}
  \begin{cases}
    \frac{1}{2} \leq x \leq N-\frac{1}{2},\\
    \lambda_{x+1/2} \leq y \leq M.
  \end{cases}
\end{equation*}
The partition $\lambda $ is given and describes the configuration of the back wall. Denote this region of $\R^3$ by $D_{\lambda, M,N}$.

\begin{figure}
\includegraphics[width=8cm]{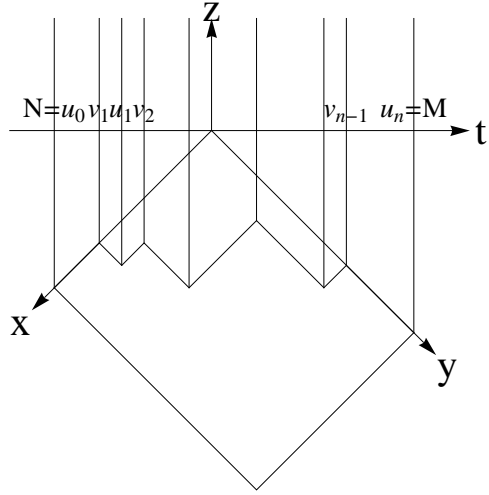}
\caption{ \label{fig:empty} An empty skew plane partition. Here the inner shape is given by the partition $(6,3,3,3,1)$.}
\end{figure}

The mapping 
\begin{equation*}
  (x,y,z)\mapsto \bigl(t=y-x,h=z-\frac{x+y}{2}\bigr)
\end{equation*}
projects $\R^3$ to $\R^2$.
This projection maps piles of cubes to tilings of the region
\begin{equation*}
E_{M, N}=\bigl\{(t,h)\ |\ -M\leq t\leq N;
h\geq \max\{-M-t/2, N+t/2\}\bigr\}
\end{equation*}
which differ from the one corresponding to the empty pile in finitely many places. This is evident from Fig. \ref{fig:skew_partition}.

\begin{figure}
\includegraphics[width=5cm]{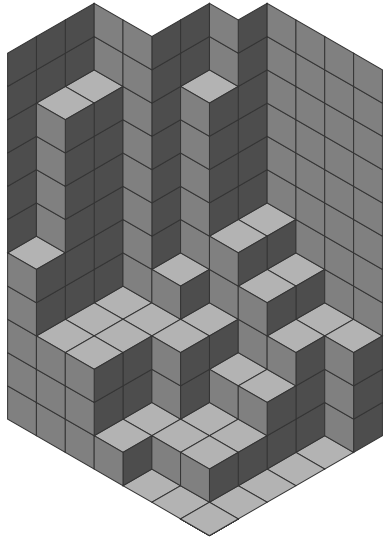}
\caption{  A skew plane partition represented as a stepped surface. Notice that the same picture can be viewed as a tiling of a portion of the plane by three types of rhombi.
}
\label{fig:skew_partition}
\end{figure}

A convenient characterization of a system of random skew plane partitions is
the collection of local correlation functions $\rho_{(t_1,h_1),\dots(t_k,h_k)}$ giving the probability to see horizontal rhombi at $(t_1,h_1),\dots,(t_k,h_k)$:
\[
\rho_{(t_1,h_1),\dots, (t_k, h_k)}=\sum_{\pi} P(\pi)\prod_{a=1}^k\sigma_{i_a,j_a}(\pi)
\]
where $\sigma_{t,h}(\pi)=1$ when $(t,h)$ is the position
of the center of a horizontal rhombus and $\sigma_{t,h}(\pi)=0$
if not.

An explicit formula was found in \cite{OR1,OR2} for such correlation functions.
To describe this formula, we first encode the boundary of the partition $\lambda$ as a function $h=-b(t)$, where
\begin{equation} \label{eq:bt}
b(t)=\frac{1}{2}\sum_{i=1}^n|t-v_i|-\frac{1}{2}\sum_{i=1}^{n-1}|t-u_i|.
\end{equation}
The points $u_1,\dots,u_{n-1}$ (resp. $v_1,\dots,v_n$) represent the inner (resp. the outer) corners of the back wall. See Figure \ref{fig:empty}.
It is convenient to set $u_0=-M$ and $u_n=N$. The lines $t=u_0$ and $t=u_n$ are limiting on the left and on the right the region tiled by rhombi.

Then we introduce the partition of the horizontal interval  $\Z + \frac{1}{2} \cap (u_0,u_n)$ into $D_-\cup D_+$ where
\begin{align*}
  D^+ &=\bigl\{m  \in \bz + \frac{1}{2} \cap (u_0,u_n)\ |\  \text{where the boundary $\lambda$ in $(t,h)$ has slope $-\frac{1}{2}$ at  $m$} \}, \\
  D^- &=\bigl\{m  \in \bz + 1/2 \cap (u_0,u_n)\ |\ \text{where the boundary $\lambda$ in $(t,h)$ has slope $+\frac{1}{2}$ at } m \bigr\}.
\end{align*}

Define the functions
$\Phi_\pm(z,t)$ by:
\begin{equation*}
  \Phi_+(z,t)=\prod_{\substack{m\in D^+\\ m > t }}(1-zq^m), \quad
  \Phi_-(z,t)=\prod_{\substack{ m\in D^-\\ m < t  }}(1-z^{-1}q^{-m}).
\end{equation*}

The following holds:
\begin{thm} \cite{OR2} \label{thm:OR2_main}
Correlation functions are determinants of the correlation kernel:
\begin{equation}\label{eq:det}
\rho_{(t_1,h_1),\dots, (t_k, h_k)}=\det \bigl[K\bigl((t_i,h_i),(t_j,h_j)\bigr)\bigr]_{1\leq i, j \leq k}
\end{equation}
The correlation kernel $K$ is given by
\begin{multline}\label{eq:main-corr2}
K((t_1,h_1),(t_2,h_2))= \frac{1}{(2\pi i)^2}
\oint_{z\in C_1}\oint_{w\in C_2}
\frac{\Phi_-(z,t_1)\Phi_+(w,t_2)}{\Phi_+(z,t_1)\Phi_-(w,t_2)}\\
\times\frac{\sqrt{zw}}{z-w}z^{-h_1-b(t_1)-1/2}w^{h_2+b(t_2)-1/2}\frac{dzdw}{zw},
\end{multline}
where $C_1$  is a simple positively oriented contour around $0$ such that its interior contains none of the zeroes of $\Phi_+(z,t_1)$ . Similarly $C_2$ is a simple positively oriented contour around $0$ such that its exterior does not contain
zeroes of $\Phi_-(w,t_2)$. Moreover, if $t_1< t_2$, then $C_1$ is contained in the interior of $C_2$, and otherwise, $C_2$ is contained in the interior of $C_1$.
\end{thm}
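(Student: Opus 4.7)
The plan is to realize the measure \eqref{eq:distr} as a \emph{Schur process} in the sense of Okounkov--Reshetikhin and extract the correlation kernel via the free-fermion formalism. First I would slice the stepped surface of $\pi$ along the diagonals $t\in\Z+\oh$. Each slice is a partition $\mu^{(t)}$; the sequence $\{\mu^{(t)}\}$ vanishes outside $(u_0,u_n)$, and consecutive slices interlace, with the direction dictated by the back-wall slope (one way at $t\in D^+$, the opposite at $t\in D^-$). A direct calculation identifies $|\pi|$ with a linear combination of the $|\mu^{(t)}|$ up to an additive constant depending only on $\lambda$, so that \eqref{eq:distr} takes the Schur-process form
\[
P(\{\mu^{(t)}\})\ =\ \frac{1}{Z}\prod_t s_{\mu^{(t\pm 1)}/\mu^{(t)}}(q^{c_t}),
\]
with specializations to single geometric sequences in $q$ and the skew direction at $t$ controlled by whether $t\in D^+$ or $t\in D^-$.

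The next step is to translate into the fermionic Fock space. Writing $\Gamma_+(z),\Gamma_-(z)$ for the usual half-vertex operators, whose matrix coefficients on partition states compute skew Schur functions, the partition function becomes a vacuum expectation
\[
Z\ =\ \langle 0|\,\Gamma_{\ve_k}(q^{m_k})\cdots\Gamma_{\ve_1}(q^{m_1})\,|0\rangle,
\]
where the $(m_i,\ve_i)$ enumerate the points of $D^+\cup D^-$ from right to left with their signs. Under the boson--fermion correspondence a horizontal rhombus at $(t,h)$ is precisely the event that the fermion mode $h$ is occupied in the slice at $t$, so the $k$-point rhombus correlation equals the vacuum expectation of $k$ fermionic bilinears $\psi^*\psi$ inserted between the $\Gamma$'s. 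Wick's theorem for free fermions then collapses this to the determinant \eqref{eq:det}, with $K$ equal to the two-point function.

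To produce the explicit double integral in \eqref{eq:main-corr2}, I would introduce the generating series $\psi(z)=\sum_k z^k\psi_k$ and $\psi^*(w)=\sum_k w^{-k}\psi^*_k$ and commute them past the vertex operators using
\[
\Gamma_+(a)\psi(z)=(1-a/z)^{-1}\psi(z)\Gamma_+(a),\qquad \Gamma_-(a)\psi(z)=(1-a/z)\psi(z)\Gamma_-(a),
\]
together with their analogues for $\psi^*$. Commuting $\psi(z)$ past the $\Gamma_+$'s at positions $m>t_1$ yields $\Phi_+(z,t_1)^{-1}$ and past the $\Gamma_-$'s at positions $m<t_1$ yields $\Phi_-(z,t_1)$, and similarly for $\psi^*(w)$ at $t_2$. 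The monomial factor $z^{-h_1-b(t_1)-\oh}$ records the mode number of $\psi$ combined with the relative grading shift between the left and right vacua; this shift, tracked as the insertion point sweeps from $-\infty$, reproduces exactly the function $b(t)$ of \eqref{eq:bt}, since a $D^+$ step and a $D^-$ step change the charge in opposite ways.

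The main obstacle is the bookkeeping in this last step: one must keep careful track of the normal-ordering constants and the relative charge/energy between the vacua on each side, verify convergence of the infinite products defining $\Phi_\pm$, and pin down the correct contours. The condition that $C_1$ be contained in $C_2$ when $t_1<t_2$ (and the opposite otherwise) is precisely the operator time-ordering needed so that the geometric expansions of $(1-a/z)^{-1}$ coming from the various commutations converge in the annulus $|w|<|z|$ or $|z|<|w|$; and the condition on the zeros of $\Phi_\pm$ is what ensures that $C_1$ encircles only the pole of $\psi(z)$ at $z=0$, while $C_2$ encircles both the pole of $\psi^*(w)$ at $w=0$ and the additional poles contributed by the $\Gamma_-$'s appearing to the left of $t_2$.
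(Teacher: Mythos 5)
Your proposal is essentially the proof of this theorem as given in its source: the paper itself does not prove Theorem \ref{thm:OR2_main} but cites it from \cite{OR2}, and the argument there (and in \cite{OR1}) is exactly your route — slicing into interlacing partitions to get a Schur process, the boson--fermion correspondence with half-vertex operators $\Gamma_\pm$, Wick's theorem for the determinantal structure, and commutation of $\psi(z)$, $\psi^*(w)$ past the $\Gamma$'s to produce the $\Phi_\pm$ factors, the shift $b(t)$, and the contour/nesting conditions. Up to convention-dependent details in the commutation relations (which you leave as a sketch), this is the same approach, so nothing further is needed.
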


\subsection{Main results} 
\label{subsec:main_intro}

We are interested in the behavior of random skew plane partitions sampled according to the probability distribution \eqref{eq:distr} for \emph{large domains}, meaning that a limit of the following form is taken:
\begin{itemize}
\item Fix a sequence $(r_k)$ of positive real numbers such that $\lim_{k\to \infty} r_k=0$ and for all $k$, let $q_k=\exp(-r_k)$.
\item Choose  sequences $(M_k)$, $(N_k)$ and $(\lambda_k)$ such that as $k$ goes to infinity, the 3-dimensional domain $r_k D_{\lambda_k, M_k, N_k}$ converges to some region $D$ of $\R^3$.
\item Find characteristic scales of fluctuations for large $k$ and describe random plane partitions at appropriate scales in the limit $k\to \infty$.
\end{itemize}

The increasing sequence of domains where
$r_k =\frac{1}{k}$ and the corners $u^{(k)}_i$ and $v^{(k)}_i$ describing the partition $\lambda_k$ are given by:
\begin{equation*}
  u^{(k)}_i=kU_i, \quad v^{(k)}_i=kV_i,
\end{equation*}
where $n$ and $U_0 \leq V_1 \leq \dots \leq V_n \leq U_n$ are fixed integers was studied in \cite{OR2}.
As $k\to \infty$, the function $(x,y)\mapsto h_k(x,y)=\pi_{kx,ky}/k$ converges in probability to a fixed limit shape.
This function is infinitely differentiable everywhere except along a curve
where the surface it defines merges with the walls. Along the curve, its second
derivative is discontinuous (the first derivative has a square root
singularity).  The projection of this curve to the $(t,h)$-plane is called
the \emph{arctic circle}  or
the boundary of the frozen region. Indeed outside this curve the tiling by
rhombi is ``frozen", which is to say that with probability tending to 1, the
tiles outside of the curve are of only one type.

The results in \cite{OR2} are obtained by using the steepest descent method to evaluate the integrals in Equation (\ref{eq:main-corr2}). In the limit one obtains exact formulas for the limit shape, as well as descriptions of the behavior of local correlation functions.

In this paper, we focus on a different type of increasing sequences of domains. We keep $r_k$ equal to $\frac{1}{k}$, but we assume that the partitions $\lambda_k$ describing the back wall consist of $n$ segments growing in length, and that each of the segments is not straight anymore, but is a ``staircase'' with fixed slope. To be precise, the $j^{th}$ segment of $\lambda_k$ consists of $k L_j$ copies of the corner with sides $(a_j, b_j)$ for some $L_j,a_j, b_j \in \Z_{>0}$ fixed, see Figure \ref{fig:NotationGraph}. 

In the framework described above, we compute the limit shape of the random pile of boxes as $k\to \infty$ and correlation functions. We show that:
\begin{itemize}
\item The system rescaled by a factor $r_k$ converges to a deterministic limit shape. This is a smooth transcendental function in the bulk. The frozen boundary consists of a single smooth curve.
\item As in \cite{OR2}, correlation functions in the bulk are given by the incomplete Beta kernel.
\item For most values of $t$, if one fixes $t$ and allows $h$ to approach infinity, the slope of the limit shape turns vertical, and the determinantal process with the incomplete Beta kernel describing the statistics of horizontal tiles degenerates into the bead process introduced in \cite{beads}.
\end{itemize}

\subsection{Relation to Dimer models} The limit shapes for rhombi tilings of bounded domains with piecewise linear boundary parallel to axes $h=const, h\pm t/2=const$ were studied in \cite{KO} where it was shown that uniformly distributed tilings develop a limit shape given by an algebraic curve. This generalizes the arctic circle theorem for large hexagons \cite{CohnLarsenPropp,J}. Local correlation functions for large hexagonal domains with the uniform distribution of rhombi tilings were studied in \cite{J,G}, and in \cite{Ke:GFF2}. 

The rhombi used in these tilings can be thought of as \emph{dimers}, since they are the union of two adjacent faces of the triangular lattice (\emph{monomers}). These random tilings are thus examples of \emph{dimer models}.
Dimer models have a long history in statistical mechanics, see for example \cite{Ka,TempFish,MCW}. Corresponding tiling models with height functions were first studied in the physics literature, where many ideas about the limit shape and the
structure of correlation functions in the thermodynamical limit can be traced, see for example \cite{NHB}.

\subsection{Acknowledgments} We are grateful to A. Borodin, A. Okounkov, and P. van Moerbeke for interesting discussions.

\section{Setup: a piecewise periodic back wall}
We are studying a system of random skew plane partitions, as described in Section \ref{sec:intro}. The precise system is described in Section \ref{subsec:main_intro}. Let us start by introducing some more notation. 

Let $\tau = rt$ and $\chi = rh$ be the rescaled coordinates. The $\tau$ coordinates where the slope of the back wall changes are denoted by $A_0, A_1,\dots ,A_n$, where $A_0$ and $A_n$ are the bounds of the shape. The $t$ coordinates that correspond to these corners will be $A_1/r,\ldots,A_n/r$. Between $A_{k-1}/r$ and $A_k/r$ we have a succession of patterns made of $a_k$ ascending steps and $b_k$ descending steps, repeated $L_k/r$ times. The slope of the piece of wall between $A_{k-1}$ and $A_k$
is $\alpha_k = \frac{a_k-b_k}{2(a_k+b_k)}$.
It will be convenient to define $\alpha_0=-\frac{1}{2}$ and $\alpha_{n+1}=\frac{1}{2}$.
\begin{figure}
  \begin{center}
  \includegraphics[width=14.6cm]{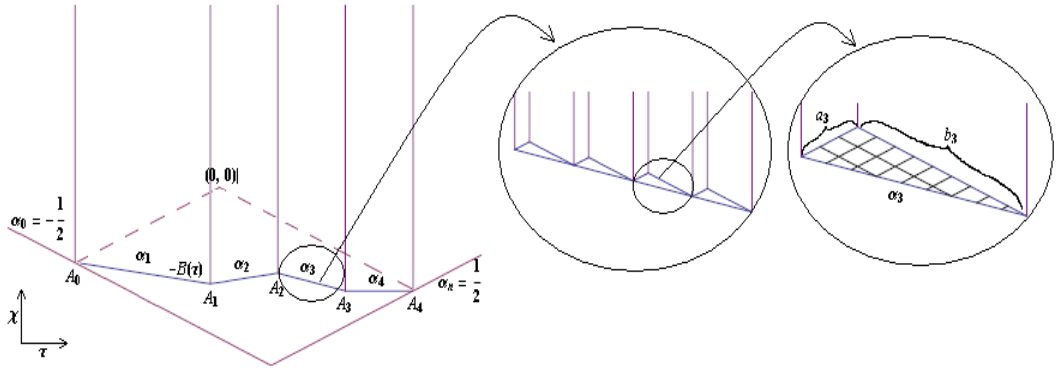}
  \end{center}
  \caption{The setup   }
\label{fig:NotationGraph}
\end{figure}
Let $j$ be the integer in $\{1,\dots,n\}$ such that $\tau \in (A_{j-1},A_j]$.
Define the function $B(\tau)$ by
\begin{equation}\label{eq:B}
 -B(\tau) = -\frac{1}{2}A_0+ \sum_{k=1}^{j-1}\alpha_k (A_{k}-A_{k-1}) + \alpha_j  (\tau-A_{j-1}) = -\sum_{k=0}^{j-1} A_k (\alpha_{k+1} -\alpha_k) + \alpha_j \tau,
\end{equation}
if $ A_{j-1} <\tau < A_j$.
$B(\tau)$ is the limiting rescaled boundary function $b(t)$ of Equation~\eqref{eq:bt} expressed in the rescaled variables $\tau$ and $\chi$. We can decide to define $B(\tau)$ from the rightmost corner $A_n$ instead of the leftmost one $A_0$, yielding an alternative expression:
\begin{equation}\label{eq:B2}
  -B(\tau) = \frac{1}{2}A_n -\sum_{k=j+1}^{n}\alpha_k(A_{k}-A_{k-1}) -\alpha_j(A_{j}-\tau)
  = \alpha_j \tau + \sum_{k=j}^{n}(\alpha_{k+1}-\alpha_k) A_k.
\end{equation}

We are interested in the correlation functions in the scaling limit when
$\lim_{r\rightarrow +0} r t_1=\lim_{r\rightarrow +0}r t_2 \text{ and } t_1-t_2=\Delta(t) \text{ is a constant}.$

\subsection{The function \texorpdfstring{$S_{\tau,\chi}(z)$}{S(z)}} We want to use the steepest descent method in order to find the asymptotics of local correlation functions as the size of the system increases and $r\to 0$. The first step in doing this is to rewrite the integral defining the correlation kernel in Theorem \ref{thm:OR2_main} as
\begin{equation*}
  K\bigl( (t_1,h_1),(t_2,h_2) \bigr) = \frac{1}{(2i\pi)^2} \oint \oint e^{\frac{S^{(r)}_{t_1,h_1}(z)-S^{(r)}_{t_2,h_2}(w)}{r}} \frac{1}{z-w} \frac{dz}{2i\pi z} \frac{dw}{2i\pi w},
\end{equation*} show that $S^{(r)}_{t,h}(z)$ converges to some function $S_{\tau,\chi}(z)$ as $r\to 0$, $r t\to \tau$, $r h\to \chi$ and find the critical points of $z\mapsto S_{\tau,\chi}(z)$.

The function $S^{(r)}_{t,h}(z)$ is given by:
 \begin{multline}
   S^{(r)}_{t,h}(z)= r\log \left(z^{-h-b(t)}\frac{\Phi_-(z,t)}{\Phi_+(z,t)}\right) \\
 =-r(h+b(t))\log z + \sum_{\stackrel{m<t, m\in D^-,}{ m\in \Z+\oh}}r\log(1-z^{-1}e^{rm}) -\sum_{\stackrel{m>t, m\in D^+,}{ m\in \Z+\oh}}r\log(1-ze^{-rm}).
\end{multline}
Here and in the sequel, $\log$ denotes the branch of the logarithm with an imaginary part in $(-\pi,\pi)$, with a cut along $\mathbb{R}_{-}$.

In order to find the limit $S_{\tau,\chi}(z)$ of this quantity, we first state a lemma:

\begin{Lemma} \label{lem:CD}
If $[C,D]$ corresponds to a piece of wall with parameters $a,b$ then
\begin{equation} \label{eq:oon}  \lim_{r\rightarrow 0}
	\sum_{\stackrel{\tfrac Cr<m'<\tfrac Dr, m'\in D^-,}
			{m'\in\Z+\oh}}
		r\log(1-z^{-1}e^{rm'})=\frac{a}{a+b}
	\int_{C}^{D} \log (1-z^{-1}e^v)dv \end{equation}
and
\begin{align}
\lim_{r\rightarrow 0}
	\sum_{\stackrel{\tfrac Cr<m'<\tfrac Dr, m'\in D^+,}
			{m'\in\Z+\oh}}
		r\log(1-z e^{-rm'})=
\frac{b}{a+b}
	\int_{C}^{D} \log (1-z e^{-v})dv.
\end{align}
\end{Lemma}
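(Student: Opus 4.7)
The plan is to recognize both sums as Riemann sums whose sample sets $D^\pm \cap (C/r, D/r)$ are periodic subsets of $\bz + \oh$ with asymptotic densities $a/(a+b)$ and $b/(a+b)$ respectively. It suffices to prove the first identity; the second follows by replacing $z^{-1}e^{rm'}$ by $ze^{-rm'}$ and swapping the roles of the two subsets.

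First I would partition the half-integers in $(C/r, D/r)$ into $K=\lfloor(D-C)/(r(a+b))\rfloor$ consecutive blocks $B_0,\dots,B_{K-1}$ of $a+b$ points each, plus a final remainder block with at most $a+b-1$ points whose contribution to the sum is $O(r)$. In every block $B_k$, exactly $a$ points lie in $D^-$ and $b$ in $D^+$, and their positions within the block depend only on the fixed shape of one period of the back wall. Let $m_k$ denote the leftmost half-integer of $B_k$ and set $f(v):=\log(1-z^{-1}e^v)$. For $z$ in a compact subset of $\C$ disjoint from the curve $\{e^v\mid v\in [C,D]\}$, the function $f$ is $C^1$ on $[C,D]$, so $|f(rm')-f(rm_k)|\leq r(a+b)\|f'\|_{L^\infty([C,D])}$ uniformly for $m'\in B_k$. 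Summing over the $a$ points of $B_k\cap D^-$ gives $a\,r\,f(rm_k)+O(r^2)$, and summing over the $K=O(1/r)$ blocks yields
\begin{equation*}
\sum_{\substack{m'\in D^-\\ C/r<m'<D/r}} r\log(1-z^{-1}e^{rm'}) \;=\; \frac{a}{a+b}\sum_{k=0}^{K-1} r(a+b)\,f(rm_k) \;+\; O(r).
\end{equation*}
The remaining sum is an ordinary Riemann sum for $f$ over $[C,D]$ with step $r(a+b)$, so it converges to $\int_C^D f(v)\,dv$ as $r\to 0$, proving \eqref{eq:oon}.

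The only real obstacle is that when $z$ lies on the curve $\{e^v\mid v\in[C,D]\}$ the integrand $f$ has a logarithmic singularity inside $[C,D]$, so the uniform bound on $f'$ used above breaks down. I would handle this by first proving the statement for $z$ in compact subsets disjoint from this curve, where the smoothness argument applies uniformly, and then upgrading to all admissible $z$ by a continuity-plus-dominated-convergence argument: the log singularity is integrable, so excising a shrinking $v$-neighborhood of the singular point contributes arbitrarily little to both the integral and the discrete sum, and off the singular curve both sides depend continuously on $z$. Everything else reduces to standard Riemann-sum estimates.
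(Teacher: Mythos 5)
Your proof is correct and follows essentially the same route as the paper: exploit the periodicity of the wall to group the half-integers into blocks of $a+b$ points, of which exactly $a$ lie in $D^-$, and reduce the sum to Riemann sums for $\int_C^D\log(1-z^{-1}e^v)\,dv$, with uniformity for $z$ in compact subsets of $\C\setminus[e^C,e^D]$. The only cosmetic difference is that the paper keeps the $a$ offsets within each period as $a$ separate interleaved Riemann sums, whereas you collapse each block to one sample point via a derivative bound; your closing remarks about $z$ on the segment $[e^C,e^D]$ are not needed, since the lemma is only used (and only claimed uniformly) away from that set.
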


\begin{proof}
The proof is elementary. Here is how it goes for \eqref{eq:oon}:
\begin{align*}
\quad&
\lim_{r\rightarrow 0}
	\sum_{\stackrel{\tfrac Cr<m'<\tfrac Dr, m'\in D^-,}
			{m'\in\Z+\oh}}
		r\log(1-z^{-1}e^{rm'})=
\lim_{r\rightarrow 0}
	\sum_{\stackrel{C<m<D, m\in D^-,}{ m\in r(\Z+\oh)}}
		r\log(1-z^{-1}e^{m})
\\=\quad&
\lim_{r\rightarrow 0}
	\sum_{\stackrel{C<m<D,}{ m\in r(a+b)(\Z+\oh)}}
		\sum_{k=0}^{a-1} r\log(1-z^{-1}e^{m+kr})
\\=\quad&
\frac{1}{a+b}
	\sum_{k=0}^{a-1}
		\lim_{r\rightarrow 0}
			\sum_{\stackrel{C<m<D,}{ m\in r(a+b)(\Z+\oh)}}
				r(a+b)\log(1-z^{-1}e^{m+kr})
\\\stackrel{\text{defn. of } \int}{=}&
\frac{1}{a+b}
	\sum_{k=0}^{a-1}
		\int_{C}^{D} \log (1-z^{-1}e^v)dv =
\frac{a}{a+b}
	\int_{C}^{D} \log (1-z^{-1}e^v)dv.
\end{align*}
The convergence is uniform in $z$ on compact sets of $\mathbb{C}\setminus [e^C,e^D]$.
\end{proof}

Using Lemma \ref{lem:CD} for each interval $[A_k,A_{k+1}], [A_{j-1},\tau],[\tau,A_j]$, and the fact that
\begin{equation*}
\lim_{rt \to \tau} r b(t) = B(\tau),
\end{equation*}
we get
\begin{align}
  S_{\tau,\chi}(z):=&\lim_{r\rightarrow 0} S^{(r)}_{t,h}(z)
= \nonumber 
-(\chi + B(\tau))\log z
\\&
+\sum_{k=1}^{j-1}
	\frac{a_k}{a_k+b_k}
		\int_{A_{k-1}}^{A_k}
			\log \bigl(1-z^{-1}e^v\bigr) dv
+\frac{a_j}{a_j+b_j}
	\int_{A_{j-1}}^\tau
		\log\bigl(1-z^{-1}e^v) dv
\\ \nonumber &
-\frac{b_j}{a_j+b_j}
	\int_{\tau}^{A_j}
		\log\big(1-ze^{-v}\bigr) dv
-\sum_{k=j+1}^{n}
	\frac{b_k}{a_k+b_k}
		\int_{A_{k-1}}^{A_k}
			\log \bigl(1-z e^{-v}\bigr)dv.
\end{align}

\section{Critical points of \texorpdfstring{$S_{\tau,\chi}(z)$}{S(z)}}

The localization of the critical points of the function $S_{\tau,\chi}$ is crucial in order to apply the steepest descent method. The following proposition gives a complete understanding of their nature.

\begin{Proposition} \label{prop:regions}
Fix $A_0 < \tau < A_n$. Then,

\begin{itemize}
  \item  there is a unique real number $\chi_0(\tau)$ such that
    \begin{enumerate}
      \item For any $\chi> \chi_0(\tau)$, $S_{\tau,\chi}(z)$ has exactly two non-real critical points, which are complex conjugates.
      \item For any $\chi<\chi_0(\tau)$, $S_{\tau,\chi}(z)$ has only real critical points.
    \end{enumerate}
  \item If $\tau\in (A_{j-1}, A_j)$ ,as $\chi\to +\infty$ the two non-real critical points $z_c,\bar{z}_c$ have the following asymptotic:
\begin{equation}\label{eq:crit_point_asymp}
  z_c,\bar{z}_c= e^{\tau} \exp\left( -e^{-\chi} \rho(\tau) e^{\pm i\pi\left(\frac{1}{2}+\alpha_j\right)}\right) (1+O(e^{-\chi})),
\end{equation}
where
\[
\rho(\tau)=\prod_{k=0}^n \left| 2 \sinh \left(\frac{\tau-A_k}{2}\right)\right|^{\alpha_{k+1}-\alpha_k},
\]
and $z_c$ is chosen to be the critical point with positive imaginary part.
\item If $\tau=A_j+\delta$, $\chi\to +\infty $ and $\delta\to 0$ such that
\[
p=e^{\chi-\chi^{(j)}}|\delta|^{1+\alpha_j-\alpha_{j+1}}
\]
is fixed, with
\begin{equation*}
e^{\chi^{(j)}} = \rho^{(j)} = \prod_{\substack{k=0 \\ k\neq j}}^{n} \Bigl| 2\sinh\bigl(\frac{A_j-A_k}{2}\bigr)\Bigr|^{\alpha_{k+1}-\alpha_k},
\end{equation*}
then the pair of complex conjugate solutions
behave as $z=e^{\tau-s|\delta|}$ where $s$ is a solution to the equation
\begin{equation}\label{eq:t-scaling}
  p=e^{\pm i\pi (\frac{1}{2}+\alpha_j)}\frac{(s-\mathrm{sign}(\delta))^{(\alpha_{j+1}-\alpha_j)}}{s}.
\end{equation}
\end{itemize}
\end{Proposition}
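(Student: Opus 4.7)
My strategy is to reduce the critical-point equation $zS_{\tau,\chi}'(z)=0$ to an explicit transcendental identity and then analyze its roots in each regime. Differentiating the integrals in $S_{\tau,\chi}$ via the substitutions $u=e^v$ gives primitives $-z^{-1}\log(1-z^{-1}e^v)$ and $z^{-1}\bigl[v-\log(e^v-z)\bigr]$, and after multiplying by $z$ and using $\tfrac{a_k}{a_k+b_k}=\tfrac12+\alpha_k$, $\tfrac{b_k}{a_k+b_k}=\tfrac12-\alpha_k$, the adjacent log differences telescope to yield
\begin{equation}\label{eq:crit-plan}
\chi+B(\tau)+P(\tau)=\sum_{k=0}^{j-1}(\alpha_{k+1}-\alpha_k)\log(z-e^{A_k})+\sum_{k=j}^n(\alpha_{k+1}-\alpha_k)\log(e^{A_k}-z)-(\tfrac12+\alpha_j)\log(z-e^\tau)-(\tfrac12-\alpha_j)\log(e^\tau-z),
\end{equation}
where $P(\tau)=(\tfrac12-\alpha_j)(A_j-\tau)+\sum_{k=j+1}^n(\tfrac12-\alpha_k)(A_k-A_{k-1})$ collects the affine pieces produced by integrating constants, and $\alpha_0=-\tfrac12$, $\alpha_{n+1}=\tfrac12$ open and close the telescoping. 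Equating the two expressions \eqref{eq:B}, \eqref{eq:B2} for $B(\tau)$ gives the identity $\sum_{k=0}^n(\alpha_{k+1}-\alpha_k)A_k=0$, from which a short manipulation yields the clean identity $B(\tau)+P(\tau)=-\tau/2$ that drives all the final simplifications.

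For part (i), I view \eqref{eq:crit-plan} on the upper half plane. Near $z=e^\tau$ the singular pair $-(\tfrac12+\alpha_j)\log(z-e^\tau)-(\tfrac12-\alpha_j)\log(e^\tau-z)$ contributes a real $-\log|z-e^\tau|$ plus a bounded imaginary phase, while the remaining $A_k$-sums are regular. Hence for $\chi$ sufficiently large the right-hand side realizes every large real value exactly once in a neighborhood of $e^\tau$, producing a unique upper-half-plane root $z_c$ by the implicit function theorem; its conjugate $\bar z_c$ solves the equation in the lower half plane. As $\chi$ decreases, $z_c(\chi)$ moves smoothly and eventually meets the real axis at a double real critical point, defining $\chi_0(\tau)$; below this value the two conjugate roots separate into two distinct real roots. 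A zero count (the right-hand side has no other singularities capable of producing large values) rules out further complex critical points.

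For part (ii), fix $\tau\in(A_{j-1},A_j)$ and let $\chi\to\infty$, so $\eta:=z_c-e^\tau\to 0$ with $z_c$ in the upper half plane. Writing $\log(e^\tau-z)=\log\eta-i\pi$ on this branch, the singular combination becomes $-\log\eta$ plus an imaginary phase, and forcing that phase to cancel uniquely determines $\arg\eta=\pi(\tfrac12-\alpha_j)$. The regular sum at $z=e^\tau$ reduces, via $|e^\tau-e^{A_k}|=e^{(\tau+A_k)/2}\cdot 2|\sinh(\tfrac{\tau-A_k}{2})|$ and the telescoping identities $\sum_k(\alpha_{k+1}-\alpha_k)=1$ and $\sum_k(\alpha_{k+1}-\alpha_k)A_k=0$, to $\tfrac{\tau}{2}+\log\rho(\tau)$. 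Combining with $B(\tau)+P(\tau)=-\tau/2$, equation \eqref{eq:crit-plan} collapses to $|\eta|=e^\tau e^{-\chi}\rho(\tau)$, giving $\eta=-e^\tau e^{-\chi}\rho(\tau)e^{-i\pi(1/2+\alpha_j)}(1+o(1))$, which is precisely \eqref{eq:crit_point_asymp} after exponentiating as $z_c=e^\tau\exp(-e^{-\chi}\rho(\tau)e^{-i\pi(1/2+\alpha_j)})$.

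For part (iii), when $\tau=A_j+\delta$ with $\delta\to 0$, the pole at $e^\tau$ and the branch point at $e^{A_j}$ collide. The combined singular factor in the exponentiated form of \eqref{eq:crit-plan} is $(z-e^\tau)^{-(1/2+\alpha_j)}(e^\tau-z)^{-(1/2-\alpha_j)}(e^{A_j}-z)^{\alpha_{j+1}-\alpha_j}$, with total exponent $-(1+\alpha_j-\alpha_{j+1})$ matching the $|\delta|$-scaling in \eqref{eq:t-scaling}. Substituting $z=e^\tau e^{-s|\delta|}$ and expanding $e^{A_j}-z\sim e^\tau|\delta|(s-\mathrm{sign}(\delta))$, $e^\tau-z\sim e^\tau s|\delta|$, I isolate the $|\delta|^{1+\alpha_j-\alpha_{j+1}}$ prefactor; the residual equation is precisely \eqref{eq:t-scaling}, with the phases $e^{\pm i\pi(1/2+\alpha_j)}$ arising from $(-1)^{1/2-\alpha_j}$ on the appropriate branch. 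The remaining regular factors, evaluated at $z=e^{A_j}$, assemble into $\rho^{(j)}=e^{\chi^{(j)}}$ by the same mechanism as in part (ii), with $A_j$ in place of $\tau$ and the $k=j$ factor omitted. The main obstacle throughout is the careful tracking of branches of $\log(z-e^\tau)$ and $\log(e^\tau-z)$ on either side of the real axis, which pin down the precise phase $e^{\pm i\pi(1/2+\alpha_j)}$, together with the verification of $B(\tau)+P(\tau)=-\tau/2$ that makes all the affine constants align.
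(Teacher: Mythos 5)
The computational core of your parts (ii) and (iii) --- substituting $z=e^{\tau-\varepsilon}$ (equivalently $\eta=z-e^{\tau}$), tracking the branch of $\log$ across the cut, and using the two expressions \eqref{eq:B} and \eqref{eq:B2} for $B(\tau)$ to cancel the affine constants --- is essentially the paper's derivation, and your phase bookkeeping ($\arg\eta=\pi(\tfrac12-\alpha_j)$ for the upper root) is consistent with \eqref{eq:crit_point_asymp}. However, even there you assume rather than prove that the non-real critical points accumulate at $e^{\tau}$. A priori the real part of the critical equation also allows them to accumulate at an inner corner $e^{A_l}$ (one with $\alpha_{l+1}<\alpha_l$), since there the term $(\alpha_{l+1}-\alpha_l)\log\bigl|1-ze^{-A_l}\bigr|$ tends to $+\infty$; so your parenthetical claim that the right-hand side ``has no other singularities capable of producing large values'' is false as stated. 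The paper excludes the corner option by the imaginary-part (angle) argument in Lemma \ref{lem:largeXcrit}: near such a corner the two dominant angle terms have coefficients of the same sign and their sum is close to $\pi$, so the imaginary part of the equation cannot vanish. Some argument of this kind is needed before your local analysis at $e^{\tau}$ applies, both for the exactness of ``two non-real critical points'' at large $\chi$ and for the asymptotics in parts (ii) and (iii).

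The genuine gap is part (i). Your argument --- the upper root exists for large $\chi$, ``moves smoothly and eventually meets the real axis,'' and this crossing defines $\chi_0$ --- does not establish that $\chi_0$ is unique, nor that the count of non-real critical points is $2$ on one side and $0$ on the other: since $f(z)=z\frac{d}{dz}S(z)=0$ is a transcendental equation, no naive zero count bounds its non-real roots, and as $\chi$ varies conjugate pairs could in principle be created and destroyed several times, at different real locations or at infinity. The paper's proof rests on two structural facts your proposal never addresses: $f$ has no zeros on $[e^{A_0},e^{A_n}]$ (Lemma \ref{lem:noin}, from the $\pm c\pi$ jump of the imaginary part across the cut), and $f'$ has exactly one zero in $\mathbb{R}\cup\{\infty\}\setminus[e^{A_0},e^{A_n}]$ (Lemma \ref{lem:noout}, from the strict monotonicity of $G(z)=(z-e^{A_j})(z-e^{\tau})f'(z)$ on the real line outside the interval together with its signs at $e^{A_0}-$ and $e^{A_n}+$). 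Because $f=g+\chi$ with $g$ independent of $\chi$, these two facts show that a real (or infinite) double critical point occurs for exactly one value $\chi_0=-g(z_0)$, and combined with the $\chi\to\pm\infty$ lemmas this yields part (i) (Proposition \ref{prop:one_double}). To make your continuity picture into a proof you would have to supply precisely this kind of uniqueness input, e.g.\ the monotonicity of $G$; as written, the uniqueness of $\chi_0$ is asserted, not proved.
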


The proof is rather technical, and requires a number of intermediate statements.
It will be completed at the end of this section.

\subsection{An expression for \texorpdfstring{$\frac{d}{dz} S_{\tau,\chi}(z)$}{S'(z)}}

Using the identities
\begin{equation*}
 \frac{a_k}{a_k+b_k}=\frac{1}{2}+\alpha_k,\quad \frac{b_k}{a_k+b_k}=\frac{1}{2}-\alpha_k,
 \quad \log \bigl(\frac{1-z e^{-v}}{1-z e^{-u}}\bigr)=u-v + \log\bigl(\frac{z-e^{v}}{z-e^{u}}\bigr),
\end{equation*}
and the formula \eqref{eq:B} for $B(\tau)$ , one gets the following expression for the derivative of $S$ with respect to $z$:

\begin{align}
 z\frac{d}{dz}S(z) \nonumber =&
-(\chi+B(\tau))
+\sum_{k=1}^{j-1}
	\frac{a_k}{a_k+b_k}
	\int_{A_{k-1}}^{A_k}
		\frac{z^{-1}e^v}{1-z^{-1}e^v}dv
+\frac{a_j}{a_j+b_j}
\int_{A_{j-1}}^{\tau}
	\frac{z^{-1}e^v}{1-z^{-1}e^v}dv
\\ \nonumber &\quad
+\frac{b_j}{a_j+b_j}
\int_{\tau}^{A_j}
		\frac{z e^{-v}}{1-z e^{-v}}dv
+\sum_{k=j+1}^n
	\frac{b_k}{a_k+b_k}
	\int_{A_{k-1}}^{A_k}
		\frac{z e^{-v}}{1-z e^{-v}}dv
\\ \nonumber =&
-(\chi+B(\tau))
-\sum_{k=1}^{j-1}
	\frac{a_k}{a_k+b_k}
	\log \left( \frac{z-e^{A_k}}{z-e^{A_{k-1}}} \right)
-\frac{a_j}{a_j+b_j}
	\log \left( \frac{z-e^{\tau}}{z-e^{A_{j-1}}} \right)
\\ \nonumber &
+\frac{b_j}{a_j+b_j}
	\log \left( \frac{1-ze^{-A_j}}{1-ze^{-\tau}} \right)
+\sum_{k=j+1}^{n}
	\frac{b_k}{a_k+b_k}
		\log \left( \frac{1-ze^{-A_k}}{1-z e^{-A_{k-1}}} \right)
\\ \nonumber =&
-(\chi+B(\tau))
-\sum_{k=1}^{j-1}
	\frac{a_k}{a_k+b_k}
		\log\left( \frac{z-e^{A_k}}{z-e^{A_{k-1}}} \right)
-\frac{a_j}{a_j+b_j}
	\log\left(\frac{z-e^{\tau}}{z-e^{A_{j-1}}}\right)
\\ \nonumber &
+\frac{b_j}{a_j+b_j}
	\log\left( \frac{z-e^{A_j}}{z-e^{\tau}} \right)
+\sum_{k=j+1}^{n}
	\frac{b_k}{a_k+b_k}
		\log\left( \frac{z-e^{A_k}}{z- e^{A_{k-1}}} \right)
\\ \nonumber &
\underbrace{
	-\frac{b_j}{a_j+b_j}(A_{j}-\tau)
	-\sum_{k=j+1}^{n}
		\frac{b_k}{a_k+b_k}(A_{k}-A_{k-1})}
_{B(\tau)+\frac{\tau}{2}}
\\=&\label{eq:SPrime}
-(\chi-\frac{\tau}{2})
-\sum_{k=1}^{j-1}
	(\frac{1}{2}+\alpha_k)
	\log\left(\frac{z-e^{A_k}}{z-e^{A_{k-1}}}\right)
-(\frac{1}{2}+\alpha_j)
	\log\left(\frac{z-e^{\tau}}{z-e^{A_{j-1}}}\right)
\\ \nonumber &
+(\frac{1}{2}-\alpha_j)
	\log\left(\frac{z-e^{A_{j}}}{z-e^{\tau}}\right)
+\sum_{k=j+1}^{n}
	(\frac{1}{2}-\alpha_k)
		\log\left(\frac{z-e^{A_k}}{z-e^{A_{k-1}}}\right).
\end{align}

Note that each of the terms $\log\left(\frac{z-e^{A_k}}{z-e^{A_{k-1}}}\right)$ is a holomorphic function of $z$ on $\mathbb{C}\setminus[e^{A_{k-1}},e^{A_k}]$. Its imaginary part is in $(-\pi,\pi)$ and represents the angle between the vectors $z-e^{A_k}$ and $z-e^{A_k-1}$. This angle is 0 when $z\in\mathbb{R}\setminus[e^{A_{k-1}},e^{A_k}]$, and approaches $\pi$ (resp. $-\pi$) when $z$ approaches $(e^{A_{k-1}},e^{A_k})$ from above (resp. from below). See Figure~\ref{fig:angles}.

\subsection{Asymptotic of critical points when \texorpdfstring{$\chi$}{chi} goes to \texorpdfstring{$\infty$}{infinity}} \label{sec:crit_points_large}
Here we will prove the last two statements of Proposition \ref{prop:regions} concerning the asymptotics of the non real critical points when $\chi$ is large.

\begin{Lemma} \label{lem:largeXcrit}
Fix $A_0 < \tau < A_n$. Then, for sufficiently large $\chi$, $S_{\tau, \chi}(z)$ has exactly two critical points, which are complex conjugates and have non-zero imaginary parts. Their asymptotic as $\chi\to +\infty$ is described by the last
part of proposition \ref{prop:regions}.
\end{Lemma}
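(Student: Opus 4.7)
My plan is to reduce the critical equation to a clean telescoped form, perform a local expansion around the accumulation point $z = e^\tau$, and complete the picture by ruling out other accumulation points via an imaginary-part constraint. Starting from \eqref{eq:SPrime}, I would split each $\log((z-e^{A_k})/(z-e^{A_{k-1}}))$ into the difference of its numerator and denominator logarithms and reindex both sums. With the conventions $\alpha_0 = -\tfrac12$ and $\alpha_{n+1}=\tfrac12$, the alternating coefficients telescope to give
\begin{equation*}
  zS'_{\tau,\chi}(z) = -\bigl(\chi - \tfrac{\tau}{2}\bigr) + \sum_{k=0}^{n}(\alpha_{k+1}-\alpha_k)\log(z-e^{A_k}) - \log(z-e^{\tau}),
\end{equation*}
so the critical equation becomes $\sum_{k}(\alpha_{k+1}-\alpha_k)\log(z-e^{A_k})-\log(z-e^{\tau}) = \chi-\tau/2$.

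Two identities make this tractable: $\sum_{k=0}^n(\alpha_{k+1}-\alpha_k) = 1$ and $\sum_{k=0}^n A_k(\alpha_{k+1}-\alpha_k) = 0$, the latter obtained by adding expressions \eqref{eq:B} and \eqref{eq:B2} for $-B(\tau)$. They imply that the LHS of the critical equation is bounded outside small disks around the singular points $e^{A_0},\dots,e^{A_n},e^{\tau}$ (in particular it tends to $0$ as $|z|\to\infty$ and to $-\tau$ as $z\to 0$), so for $\chi$ large every critical point must lie in one of these disks. Setting $\eta = z - e^\tau$ in the upper half-plane disk around $e^\tau$ and expanding the bounded contribution to first order, the equation reduces to
\begin{equation*}
  -\log\eta + \sum_{k=0}^n(\alpha_{k+1}-\alpha_k)\log|e^\tau - e^{A_k}| + i\pi\bigl(\tfrac12-\alpha_j\bigr) = \chi - \tfrac{\tau}{2} + O(\eta),
\end{equation*}
the $i\pi(\tfrac12-\alpha_j)$ arising from the principal-branch jumps of the logarithms with $k\geq j$. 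Using $|e^\tau - e^{A_k}| = e^{(\tau+A_k)/2}\,|2\sinh((\tau-A_k)/2)|$ together with the two identities above, the real and imaginary parts separately yield $|\eta| = e^{\tau-\chi}\rho(\tau)$ and $\arg\eta = \pi(\tfrac12-\alpha_j)$; conjugation produces the second critical point in the lower half-plane, and the simple relation $-e^{\pm i\pi(\tfrac12+\alpha_j)} = e^{\mp i\pi(\tfrac12-\alpha_j)}$ recasts the answer as \eqref{eq:crit_point_asymp}.

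The hardest step is ruling out critical points in the disks around the remaining singular points $e^{A_k}$. Near such a point the dominant term of the LHS is $(\alpha_{k+1}-\alpha_k)\log(z-e^{A_k})$, while the bounded remainder has imaginary part $\pi(\tfrac12-\alpha_{k+1})$ if $k\geq j$ and $-\pi(\tfrac12+\alpha_{k+1})$ if $k<j$, contributed by the other $\log$ factors via the principal branch (with the extra $-\pi$ for $k<j$ coming from $\log(z-e^\tau)$). For $z$ in the upper half-plane one has $\arg(z-e^{A_k})\in(0,\pi)$; requiring the full imaginary part to vanish produces, in each of the four cases indexed by the sign of $\alpha_{k+1}-\alpha_k$ and the position of $k$ relative to $j$, a constraint that simplifies to $\alpha_k>\tfrac12$ or $\alpha_k<-\tfrac12$, neither of which is compatible with $\alpha_k\in(-\tfrac12,\tfrac12)$. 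This shows that no critical point concentrates at any $e^{A_k}$, and combined with the previous paragraph yields exactly two non-real critical points, forming a complex-conjugate pair. I expect this careful tracking of branch-cut contributions to be the main technical obstacle.
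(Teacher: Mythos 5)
Your telescoped form of $zS_{\tau,\chi}'(z)$, the two identities $\sum_k(\alpha_{k+1}-\alpha_k)=1$ and $\sum_k A_k(\alpha_{k+1}-\alpha_k)=0$ (which come from \emph{equating}, not adding, \eqref{eq:B} and \eqref{eq:B2} --- a harmless slip), and the local expansion at $e^\tau$ are correct and reproduce \eqref{eq:crit_point_asymp}; this part is essentially the paper's own computation (the paper separates the real and imaginary parts of \eqref{eq:SPrime} and arrives at \eqref{eq:critpoints_multislope}) recast in complex form. The genuine gap is in your exclusion step. The imaginary-part/branch-jump argument does rule out the interior corners $e^{A_1},\dots,e^{A_{n-1}}$, since there the angle forced by $\Im=0$ lies strictly outside $[0,\pi]$; but it degenerates at the two endpoints. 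At $e^{A_0}$ one has $\alpha_0=-\tfrac12$, so the condition only forces $\arg(z-e^{A_0})\to\pi$, and at $e^{A_n}$ (where $\alpha_{n+1}=\tfrac12$) it only forces $\arg(z-e^{A_n})\to 0$: neither is a contradiction for a non-real critical point hugging the real axis, and for a genuinely real critical point in $\mathbb{R}\setminus[e^{A_0},e^{A_n}]$ the imaginary part of the critical equation vanishes identically, so no bookkeeping of branch cuts can exclude it. That such points are a real threat is shown by Lemma \ref{lem:smallXcrit}: for $\chi$ very negative the critical points are exactly real and sit near $e^{A_0}$ and $e^{A_n}$. Since the lemma asserts \emph{exactly two} critical points, both non-real, your boundedness step (which correctly confines all critical points to small disks around $e^{A_0},\dots,e^{A_n},e^\tau$) leaves the two endpoint disks open.

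The missing ingredient is the real part, which is how the paper organizes the proof: at any corner with $\alpha_{k+1}-\alpha_k>0$ --- in particular at $e^{A_0}$ and $e^{A_n}$, where $\alpha_1-\alpha_0=\alpha_1+\tfrac12>0$ and $\alpha_{n+1}-\alpha_n=\tfrac12-\alpha_n>0$ --- the dominant term $(\alpha_{k+1}-\alpha_k)\log|z-e^{A_k}|$ drives $\Re$ of your left-hand side to $-\infty$, while it must equal $\chi-\tau/2\to+\infty$. So the real part alone restricts accumulation to $e^\tau$ and the inner corners, and the angle argument is then needed (and works, as you show) only at inner corners; this also disposes of real critical points for large $\chi$. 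With that sign-of-coefficient observation added, your exclusion is complete. Note finally that the paper's proof of this lemma also establishes the near-corner double-scaling regime $\tau=A_j+\delta$, $\delta\to0$ with $p$ fixed (the third bullet of Proposition \ref{prop:regions}, Equations \eqref{eq:CritPtsNearCorn} and \eqref{eq:t-scaling}); your expansion treats only fixed $\tau\in(A_{j-1},A_j)$, so that case would still have to be worked out, e.g.\ by redoing your local analysis with both $e^{A_j}$ and $e^\tau$ inside the same small disk.
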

\begin{proof}
  \underline{First case:} $A_{j-1} < \tau < A_j$.

Critical points are solutions to the equation
\begin{equation}\label{eq:crit}
 z\frac{d}{dz}S(z) = 0.
\end{equation}

The real part of this equation can be written as
\begin{align*}
0=
\Re \left( z\frac{d}{dz}S(z) \right)
&=
-(\chi-\frac{\tau}{2})
-\sum_{k=1}^{j-1}
	(\frac{1}{2}+\alpha_k)
	\log\left|\frac{z-e^{A_k}}{z-e^{A_{k-1}}}\right|
-(\frac{1}{2}+\alpha_j)
	\log\left|\frac{z-e^{\tau}}{z-e^{A_{j-1}}}\right|
\\&
+(\frac{1}{2}-\alpha_j)
	\log\left|\frac{z-e^{A_{j}}}{z-e^{\tau}}\right|
+\sum_{k=j+1}^{n}
	(\frac{1}{2}-\alpha_k)
		\log\left|\frac{z-e^{A_k}}{z-e^{A_{k-1}}}\right|
\\&=
-(\chi-\frac{\tau}{2})
-\log\bigl|1-ze^{-\tau}\bigr|
+\sum_{k=0}^n
	(\alpha_{k+1}-\alpha_k)
		\log\bigl|1-z e^{-A_k}\bigr|.
\end{align*}

From here we can see that a solution $z$ as  $\chi\to +\infty$ should approach to $e^\tau$, or to $e^{A_l}$, for some $l\in\{1,\dots,n\}$ with $\alpha_{l+1}< \alpha_l$ (inner corner).

The imaginary part of equation (\ref{eq:crit}) for the critical points is:
\begin{multline}\label{eq:im-part}
0 =
-\sum_{k=0}^{j-1}
	(\frac{1}{2}+\alpha_k)
		\mathrm{angle}(z-e^{A_{k-1}},z-e^{A_k})
-(\frac{1}{2}+\alpha_j)\mathrm{angle}(z-e^{A_{j-1}},z-e^{\tau})
\\
+(\frac{1}{2}-\alpha_j)\mathrm{angle}(z-e^\tau,z-e^{A_j})
+\sum_{k=j+1}^{n}(\frac{1}{2}-\alpha_k)
	\mathrm{angle}(z-e^{A_{k-1}},z-e^{A_k}).
\end{multline}

\begin{figure}[h!]
  \begin{center}
  \includegraphics[width=7cm]{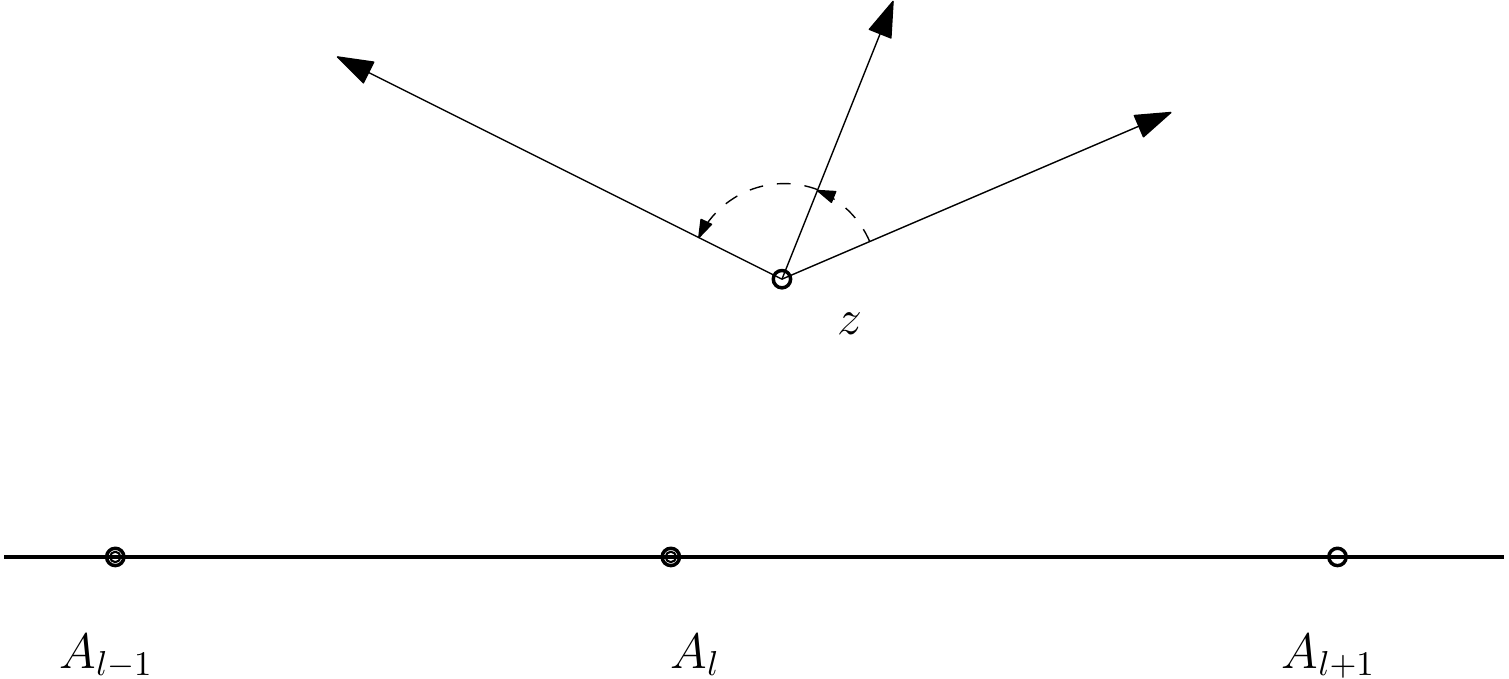}
  \caption{Geometric representation of the imaginary part of $\log\left( \frac{z-e^{A_l}}{z-e^{A_{l-1}}} \right)$ as the angle between $z-e^{A_l}$ and $z-e^{A_{l-1}}$.}
  \label{fig:angles}
  \end{center}
\end{figure}

Assume that $z$ is a critical point with positive imaginary part, and
$z\to e^{A_l}$ for some $l$ (since the equation has real coefficients,
the non real solutions will come as pairs of conjugate complex numbers).
Since $z$ approaches to the real axis, the angle between $z-e^{B}$ and
$z-e^C$ approaches to zero, unless $B$ or $C$ is equal to $A_l$.
Therefore as $\chi\to +\infty$ , two terms in Equation \eqref{eq:im-part}
will dominate. These are
\begin{equation*}
  \begin{cases}
    (\frac{1}{2}+\alpha_l)\mathrm{angle}(z-e^{A_{l-1}},z-e^{A_l}) + (\frac{1}{2}+\alpha_{l+1}) \mathrm{angle}(z-e^{A_l},z-e^{A_{l+1}}) & \text{if $\tau<A_l$},\\
    (\frac{1}{2}-\alpha_l)\mathrm{angle}(z-e^{A_{l-1}},z-e^{A_l}) +(\frac{1}{2}-\alpha_{l+1})
    \mathrm{angle}(z-e^{A_l},z-e^{A_{l+1}}) & \text{if $\tau > A_l$}.
  \end{cases}
\end{equation*}
Thus, these expressions should vanish as $\chi\to +\infty$.

But this is impossible, because the sum
\begin{equation*}
  \mathrm{angle}(z-e^{A_{l-1}},z-e^{A_l})+
  \mathrm{angle}(z-e^{A_l},z-e^{A_{l+1}})=
  \mathrm{angle}(z-e^{A_{l-1}},z-e^{A_{l+1}})
\end{equation*}
is positive and close to $\pi$ (see Figure \ref{fig:angles}). Therefore this option for a critical point is impossible.

The only possibility left is that $z\to e^{\tau}$ as $\chi\to +\infty$. Let $z=e^{\tau-\e}$, where $\e\to 0$ as $\chi\to \infty$ and $\arg\e\in(-\pi,\pi)$.

As $\e\to 0$, we have for $k\neq j$:
\begin{align*}
  \log\left(\frac{z-e^{A_k}}{z-e^{A_{k-1}}}\right) &=\log\left(\frac{e^{\tau}-e^{A_{k}}}{e^{\tau}-e^{A_{k-1}}}\right)+O(\e) \\
  &= \frac{A_k-A_{k-1}}{2}+\log 2\sinh\left| \frac{\tau-A_k}{2} \right|-\log 2\sinh\left| \frac{\tau-A_{k-1}}{2} \right|+O(\e), \\
  \log\left(\frac{z-e^{\tau}}{z-e^{A_{j-1}}} \right) &= \frac{\tau-A_{j-1}}{2}+\log(-\e)-\log  2\sinh\left|\frac{\tau-A_{j-1}}{2}\right| +O(\e),\\
  \log\left( \frac{z-e^{A_j}}{z-e^{\tau}} \right)&= \frac{A_j-\tau}{2}-\log(\e)+\log 2\sinh \left|\frac{\tau-A_j}{2}\right| +O(\e).
\end{align*}
Substitute these expressions in the equation for critical points:
\begin{align}
0 \nonumber &=
-(\chi-\frac{\tau}{2})
-\sum_{k=1}^{j-1}
	(\alpha_k+\frac{1}{2})
		\left(\frac{A_k-A_{k-1}}{2}
		+\log 2\sinh\Bigl| \frac{\tau-A_k}{2} \Bigr|
		-\log 2\sinh\Bigl| \frac{\tau-A_{k-1}}{2}\Bigr|\right)
\\ \nonumber &
-(\alpha_j+\frac{1}{2})
	\left(  \frac{\tau-A_{j-1}}{2}+\log(-\e)
	-\log 2\sinh\Bigl|\frac{\tau-A_{j-1}}{2}\Bigr|
	\right)
\\ \nonumber &
+(\frac{1}{2}-\alpha_j)
	\left(\frac{A_j-\tau}{2}+
	\log 2\sinh \Bigl|\frac{\tau-A_j}{2}\Bigr|
	-\log(\e)\right)
\\ \nonumber &
+\sum_{k=j+1}^{n}
	(\frac{1}{2}-\alpha_k)
		\left( \frac{A_k-A_{k-1}}{2}
		+\log 2\sinh\Bigl| \frac{\tau-A_k}{2} \Bigr|
		-\log 2\sinh\Bigl| \frac{\tau-A_{k-1}}{2} \Bigr|\right)
+O(\e)
\\&=
-\chi
-\log(\e)
\mp (\alpha_j+\frac{1}{2})i\pi
+ \sum_{k=0}^n
	(\alpha_{k+1}-\alpha_k)
	\log 2\sinh\Bigl|\frac{\tau-A_k}{2}\Bigr| 
+O(\e).
\label{eq:critpoints_multislope}
\end{align}
Here we used the identity 
\begin{equation*}
\log(-\varepsilon)=
  \begin{cases}
    \log \varepsilon -i\pi&\text{if $\arg\varepsilon >0$}, \\
    \log \varepsilon + i\pi&\text{if $\arg\varepsilon <0$},
  \end{cases}
\end{equation*}
which holds for our choice of branch for $\log$. The cancellation of the linear combination of the $A_i$ comes from the two possible expressions for $-B(\tau)$ given by Equations \eqref{eq:B} and \eqref{eq:B2}.

Solving this equation for $\e$, we arrive to the asymptotical formula:
\begin{align*}
\e = e^{-\chi} e^{\pm i\pi(\frac{1}{2}+\alpha_j)} \underbrace{\prod_{k=0}^n \left| 2 \sinh \left(\frac{\tau-A_k}{2}\right)\right|^{\alpha_{k+1}-\alpha_k}}_{\rho(\tau)}\left(1+O(e^{-\chi})\right),
\end{align*}
which implies that the asymptotic of the pair of complex conjugate solutions
is given by the formulae in the last part of proposition \ref{prop:regions}.

\underline{Second case:} $\tau$ becomes close to $A_j$ for $1\leq j\leq  N-1$ as $\chi$ goes to $+\infty$.

When $\tau=A_j+\delta$, $\delta\to 0$ and $\chi\to +\infty$,
the same arguments as above show that complex critical points still accumulate near $z=e^\tau$ but instead of the  Equation~\eqref{eq:critpoints_multislope} for $\e$, where $z=e^{\tau-\varepsilon}$ we will have

\begin{equation}
\label{eq:CritPtsNearCorn}
\chi-\chi^{(j)}=-(\alpha_j+\frac{1}{2})\log(-\varepsilon) 
+(\frac{1}{2}-\alpha_j)\log\left(\frac{\varepsilon-\delta}{\varepsilon}\right) - (\frac{1}{2}-\alpha_{j+1})\log(\varepsilon-\delta) +O(\e).
\end{equation}
From here we obtain equation (\ref{eq:t-scaling}) for $s=\varepsilon/|\delta|$.

This completes the proof of Lemma \ref{lem:largeXcrit}.
\end{proof}

Whereas the previous lemma describes what happens when $\chi$ goes to $+\infty$,  the situation when $\chi$ goes to $-\infty$ is given in the following:
\begin{Lemma} \label{lem:smallXcrit}
For each $A_0 < \tau < A_n$, and sufficiently negative $\chi$, every critical point of $S(z)$ is real.
\end{Lemma}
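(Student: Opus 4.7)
The plan is to exploit the decomposition
\[
z\frac{d}{dz}S(z) = -\bigl(\chi - \tfrac{\tau}{2}\bigr) + F(z),
\]
obtained by the same manipulation as in the proof of Lemma~\ref{lem:largeXcrit}, where $F(z)$ is a function of $z$ (and of $\tau$ and the back-wall data) that is \emph{independent of $\chi$}: explicitly $\Re F(z) = -\log|1-ze^{-\tau}| + \sum_{k=0}^n(\alpha_{k+1}-\alpha_k)\log|1-ze^{-A_k}|$, and $\Im F(z)$ is the right-hand side of \eqref{eq:im-part}. Since $\chi$ enters only additively in the real part, any non-real critical point in the upper half-plane $\{\Im z>0\}$ lies on the $\chi$-independent set $\Gamma := \{z : \Im z>0,\ \Im F(z)=0\}$ and additionally satisfies $\Re F(z) = \chi - \tfrac{\tau}{2}$. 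The plan is to prove that $c_{-} := \inf_{z\in\Gamma}\Re F(z) > -\infty$; then for any $\chi < c_{-} + \tfrac{\tau}{2}$ there is no non-real critical point in $\{\Im z>0\}$, and by complex-conjugation symmetry (the coefficients of the critical-point equation being real) there is none in $\{\Im z<0\}$ either, so every critical point is real.

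I would first catalog the behaviour of $\Re F$ near the boundary of the upper half-plane. Its only singularities lie at the corners $e^{A_0},\ldots,e^{A_n}$ and at $e^{\tau}$: $\Re F \to +\infty$ at $e^{\tau}$ and at every \emph{inner} corner ($\alpha_{l+1}<\alpha_l$), and $\Re F\to-\infty$ at every \emph{outer} corner ($\alpha_{l+1}>\alpha_l$, including the endpoint corners $A_0$ and $A_n$). As $|z|\to\infty$ in the upper half-plane, the coefficient of $\log|z|$ equals $-1+\sum_k(\alpha_{k+1}-\alpha_k)=0$, so $\Re F$ converges to the finite real constant $\tau-\sum_k(\alpha_{k+1}-\alpha_k)A_k$. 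Consequently, showing $c_{-} > -\infty$ reduces to showing that $\Gamma$ stays at positive distance from every outer corner.

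For each interior outer corner $e^{A_l}$ with $1\le l\le n-1$, this is precisely the obstruction identified in the proof of Lemma~\ref{lem:largeXcrit}: as $z\to e^{A_l}$ in the upper half-plane, the two dominant angles $\mathrm{angle}(z-e^{A_{l-1}},z-e^{A_l})$ and $\mathrm{angle}(z-e^{A_l},z-e^{A_{l+1}})$ appear in $\Im F$ with positive coefficients $\tfrac12\pm\alpha_l$ and $\tfrac12\pm\alpha_{l+1}$, and their sum is close to $\pi$; hence $|\Im F|$ is bounded below by a positive constant on a neighborhood of $e^{A_l}$. At the endpoint corners $l=0$ and $l=n$ only one dominant angle survives, because there is no $A_{-1}$ or $A_{n+1}$; writing $z = e^{A_0} + \rho e^{i\phi}$ with $\phi\in(0,\pi)$, the leading term of $\Im F$ is $-(\tfrac12+\alpha_1)(\pi-\phi)$, which vanishes only at $\phi=\pi$. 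I would then run a second-order expansion, using that $F$ is real on the segment $(0,e^{A_0})$ so that its analytic extension has real Taylor coefficients at $e^{A_0}$, to obtain $\Im F(z) = \psi\bigl[-(\tfrac12+\alpha_1) + O(\rho)\bigr] + O(\psi^3)$ with $\psi := \pi-\phi$, and conclude that $\Im F(z)=0$ in the open upper half-plane forces $\rho$ to be bounded below. The same analysis handles $e^{A_n}$.

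With $\Gamma$ at positive distance from every outer corner, $\Re F$ is continuous on the closure of $\Gamma$ inside the upper half-plane with those corners removed, has $+\infty$ limits at the remaining singularities $e^{\tau}$ and inner corners, and a finite limit at infinity; a standard compactness argument then delivers $c_{-} > -\infty$, completing the proof. The main obstacle is the endpoint-corner analysis, where the two-angle trick of Lemma~\ref{lem:largeXcrit} degenerates and the tangential second-order expansion above must be carried out by hand.
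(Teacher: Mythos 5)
Your proposal is correct and takes essentially the same route as the paper: the paper also argues that for very negative $\chi$ the real part of $z\frac{d}{dz}S(z)=0$ forces $z$ to approach an outer corner $e^{A_l}$ (with $\alpha_{l+1}>\alpha_l$), rules out the interior outer corners by the same two-angle argument as in Lemma~\ref{lem:largeXcrit}, and treats $e^{A_0},e^{A_n}$ separately; your $\chi$-independent decomposition $f=-(\chi-\tfrac{\tau}{2})+F$ with the compactness bound $\inf_{\Gamma}\Re F>-\infty$ is just a uniform repackaging of that asymptotic argument. The only substantive difference is that you spell out the endpoint-corner analysis (the tangential expansion giving $\Im F=-\psi\bigl[(\tfrac12+\alpha_1)+O(\rho)\bigr]$, which indeed forces such $z$ onto the real axis with $z<e^{A_0}$, resp.\ $z>e^{A_n}$) and the finite limit of $\Re F$ at $z=\infty$, details the paper merely asserts.
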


\begin{proof}
This is proven in the same way as Lemma \ref{lem:largeXcrit}. To have $\Re(z\frac{d}{dz}S(z))=0$, $z$ needs to be very close to $e^{A_l}$, for some $l\in\{0,1,\ldots,n\}$, such that $\alpha_{l+1}>\alpha_l$. Exactly as in the case $\chi\rightarrow\infty$, one can show that if $l\neq 0$ or $n$, then $\Im(z\frac{d}{dz}S(z))$ cannot be $0$.
If $z$ is near $e^{A_0}$ or $e^{A_n}$, then $z$ is actually real and respectively $z<e^{A_0}$ or $z>e^{A_n}$. This shows that when $\chi$ is sufficiently negative, $z\frac{d}{dz}S(z)=0$ has no complex solutions.
\end{proof}

\subsection{Proof of Proposition \ref{prop:regions}} Note that the last two parts of the proposition were proven in the previous section. To prove the first part we first show that for fixed $\tau\in (A_0, A_n)$ there is a unique value of $\chi=\chi_0(\tau)$ such that the number of critical points of $S(z)$ with non-zero imaginary part changes as $\chi$ passes through $\chi_0$. That is:

\begin{Proposition} \label{prop:one_double}
For each $\tau\in (A_0, A_n)$ there exists $\chi_0$ such that the number
of critical points of $S(z)$ with $\Im(z)\neq 0$ is the same for all
$\chi>\chi_0$, and it decreases by $2$ when $\chi$ passes the point
$\chi_0$ into the region  $\chi<\chi_0$.
\end{Proposition}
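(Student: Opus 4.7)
The plan is to rewrite the critical-point equation $z\frac{d}{dz}S_{\tau,\chi}(z)=0$ in the form $\chi = H(z)$, where
\[
H(z) := \tfrac{\tau}{2} - \log(1-ze^{-\tau}) + \sum_{k=0}^n (\alpha_{k+1}-\alpha_k)\log(1-ze^{-A_k})
\]
is an analytic function (on $\mathbb{C}$ minus its branch cuts) that is independent of $\chi$; this reorganization follows directly from equation \eqref{eq:SPrime}. Accordingly, every critical point with nonzero imaginary part lies on the $\chi$-independent curve $\Gamma := \{z \in \mathbb{H} : \Im H(z) = 0\}$ in the upper half plane $\mathbb{H}$ (or on its complex conjugate in the lower half plane), at a position where $\Re H(z) = \chi$. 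Lemma \ref{lem:largeXcrit} supplies a unique $z_c(\chi) \in \mathbb{H}$ close to $e^\tau$ for each sufficiently large $\chi$, while Lemma \ref{lem:smallXcrit} shows no complex critical point exists for $\chi$ sufficiently negative.

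The first main step is to track $z_c(\chi)$ continuously as $\chi$ decreases from $+\infty$. By the implicit function theorem, the trajectory in $\mathbb{H}$ is smooth as long as $H'(z_c)\neq 0$, and can terminate only by hitting $\partial\mathbb{H}$ or escaping to infinity. Escape to infinity is ruled out by the expansion
\[
H(z) = H(\infty) + \frac{c}{z} + O(z^{-2}) \quad \text{as } |z|\to\infty,
\]
with both $H(\infty) = \tfrac{3\tau}{2}-\sum_k(\alpha_{k+1}-\alpha_k)A_k$ and $c = e^\tau - \sum_k(\alpha_{k+1}-\alpha_k)e^{A_k}$ real (they are real linear combinations of $e^\tau$ and the $e^{A_k}$); consequently any large-$|z|$ solution of $H(z)=\chi\in\mathbb{R}$ must have $z$ approximately real, incompatible with $z\in\mathbb{H}$. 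Thus $z_c(\chi)$ reaches $\partial\mathbb{H}$ at some first value $\chi_0$, merging with its conjugate at a real point $x_0$ satisfying $H'(x_0)=0$, and the count of nonreal critical points drops by $2$ at that moment.

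The second main step is to rule out any other bifurcation of complex critical points for $\chi\in(\chi_0,+\infty)$, so that the count stays equal to $2$ throughout that interval. Since complex critical points come in conjugate pairs, any additional bifurcation requires a real double zero of $H(z)-\chi$; because $H$ is real on the real line only on $(-\infty, e^{A_0}) \cup (e^{A_n}, +\infty)$, such a zero must lie in one of these extreme intervals. Using the rational form
\[
H'(z) = -\frac{1}{z-e^\tau} + \sum_{k=0}^n \frac{\alpha_{k+1}-\alpha_k}{z-e^{A_k}}
\]
together with the boundary behavior $H(\pm\infty)=H(\infty)$ and $H\to-\infty$ at both $e^{A_0}$ and $e^{A_n}$ (the residues $\alpha_1-\alpha_0$ and $\alpha_{n+1}-\alpha_n$ being positive by the convention $\alpha_0=-\tfrac12$, $\alpha_{n+1}=\tfrac12$), one shows that $H$ is unimodal on each extreme interval, with a single interior local maximum and no local minimum below that maximum. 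Then $\chi_0$ equals the larger of the two maxima, no bifurcation can occur above $\chi_0$, and the count of nonreal critical points transitions cleanly from $2$ to $0$.

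The main technical obstacle is the unimodality assertion in the last paragraph: bounding the number of real zeros of $H'$ on each extreme semi-infinite interval and ruling out interior local minima requires a careful sign analysis exploiting $\alpha_k\in(-\tfrac12,\tfrac12)$ together with the total-residue identity $\sum_{k=0}^n(\alpha_{k+1}-\alpha_k)=1$.
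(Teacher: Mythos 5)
Your reformulation $\chi=H(z)$ with $H$ independent of $\chi$ is exactly the paper's starting point ($f(z)=z\frac{d}{dz}S(z)$ differs from $-\chi$ by a $\chi$-independent function), but the heart of the proposition is precisely the step you defer as ``the main technical obstacle'': showing that $H'$ has exactly one zero in $(\br\cup\{\infty\})\setminus[e^{A_0},e^{A_n}]$, so that a real (or at-infinity) double zero of $H-\chi$ can occur for exactly one value $\chi_0$. This is the paper's Lemma \ref{lem:noout}, proved there by the factorization $G(z)=(z-e^{A_j})(z-e^\tau)f'(z)$, observing that every summand in \eqref{eq:G} is of the form $c\frac{(z-u_1)(z-u_2)}{(z-v_1)(z-v_2)}$ with signs making it decreasing on $\br\setminus[e^{A_0},e^{A_n}]$ (continuing through $\infty$, where $G$ is regular), together with the sign checks at $e^{A_0}{-}$ and $e^{A_n}{+}$. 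Asserting unimodality ``by a careful sign analysis'' without carrying it out leaves the proof incomplete; worse, the picture you sketch is not the correct one: since $H'$ has only one zero in the whole complement of $[e^{A_0},e^{A_n}]$, $H$ has a single maximum on one of the two rays (possibly at $\infty$) and is monotone on the other. A ``single interior local maximum on each extreme interval'' with ``$\chi_0$ the larger of the two maxima'' cannot happen: as $\chi$ decreases through each such maximum a conjugate pair would turn real, so two maxima would force at least four nonreal critical points for large $\chi$, contradicting Lemma \ref{lem:largeXcrit}. (Also, your claim that $H$ is real on $\br$ only outside $[e^{A_0},e^{A_n}]$ is the content of Lemma \ref{lem:noin} and itself needs the observation that the relevant coefficient $\tfrac12\pm\alpha_i$ is nonzero.)

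The step ruling out escape to infinity is also flawed. First, ``$z$ approximately real'' is not incompatible with $z\in\mathbb{H}$; the correct statement when $c\neq 0$ is that $\Im H(z)=-\Im(z)\,|z|^{-2}\bigl(c+O(|z|^{-1})\bigr)$, which cannot vanish for $\Im z>0$ and $|z|$ large. Second, and more importantly, $c=e^\tau-\sum_k(\alpha_{k+1}-\alpha_k)e^{A_k}$ genuinely vanishes for some $\tau$ in many wall configurations (the weights sum to $1$, so e.g.\ for a wall with only outer corners there is exactly one such $\tau$), and then the conjugate pair does merge at $z=\infty$, with $\chi_0=H(\infty)$; this is why the paper's Lemma \ref{lem:noout} explicitly allows the unique zero of $f'$ to be ``possibly at $\infty$'' and treats $f'(\infty)$ via $\lim_{z\to\infty}z^2\frac{d}{dz}f(z)$. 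The conclusion of the proposition survives in that degenerate case, but your argument as written excludes a scenario that actually occurs, so both the trajectory-tracking step and the uniqueness step need to be repaired along the lines of Lemmas \ref{lem:noin} and \ref{lem:noout}.
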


We prove Proposition \ref{prop:one_double} at the end of this section, after setting up some technical tools.

Define $f(z)=z\frac{d}{dz} S(z)$. It is clear that $z=0$ is not a critical point
of $S(z)$ and therefore critical points are zeroes of $f(z)$ in $\mathbb{C}\setminus\{0\}$. It is also clear
that $z$ is a double critical point if and only if
\[
f(z)=f'(z)=0.
\]

The following two lemmas give some information about the location of the double critical points of $S(z)$.

\begin{Lemma} \label{lem:noin}
There are no solutions to $f(z)=0$ in $[e^{A_0},e^{A_n}]$.
\end{Lemma}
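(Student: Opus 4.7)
The plan is to look at boundary values of $f(z):=z\frac{d}{dz}S_{\tau,\chi}(z)$ from the upper half-plane on the closed interval $[e^{A_0},e^{A_n}]$ and show they are never zero. Formula \eqref{eq:SPrime} expresses $f$ as a real constant plus a signed sum of logarithms of the form $c\log\left(\frac{z-e^{B}}{z-e^{C}}\right)$. Since our branch has $\Im\log\in(-\pi,\pi)$, the imaginary part of such a logarithm at $z=x+i0^{+}$ equals $\pi$ when $e^{C}<x<e^{B}$, $-\pi$ when $e^{B}<x<e^{C}$, and $0$ otherwise; only these logarithms contribute to $\Im f$, as the $-(\chi-\tau/2)$ term is real.

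For $x$ in an open sub-interval $(e^{A_{l-1}},e^{A_l})$, and in the case $l=j$ further split by whether $x<e^{\tau}$ or $x>e^{\tau}$, a short case check of \eqref{eq:SPrime} shows that exactly one logarithm in the sum ``straddles'' $x$. Working through the four cases $l<j$, $l>j$, $l=j$ with $x<e^{\tau}$, and $l=j$ with $x>e^{\tau}$, this straddling term has signed coefficient $-(\tfrac{1}{2}+\alpha_l)$, $+(\tfrac{1}{2}-\alpha_l)$, $-(\tfrac{1}{2}+\alpha_j)$ and $+(\tfrac{1}{2}-\alpha_j)$ respectively. Since $\alpha_k=\frac{a_k-b_k}{2(a_k+b_k)}$ with $a_k,b_k\in\Z_{>0}$, one has $|\alpha_k|<\tfrac12$ strictly, so each of these candidate coefficients is nonzero. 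Consequently $\Im f(x+i0^{+})\in\pi(\R\setminus\{0\})$, so $f(x+i0^{+})\neq 0$, for every $x\in[e^{A_0},e^{A_n}]$ outside the finite exceptional set $\{e^{A_0},\dots,e^{A_n},e^{\tau}\}$.

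At each exceptional point $x_0$ the two (or one, at the endpoints) logarithms of \eqref{eq:SPrime} involving $z-x_0$ produce a combined coefficient of $\log(z-x_0)$ that can be read off directly: it equals $\alpha_{l+1}-\alpha_l$ at an interior corner $e^{A_l}$ ($1\leq l\leq n-1$), $\tfrac{1}{2}+\alpha_1$ at $e^{A_0}$, $\tfrac{1}{2}-\alpha_n$ at $e^{A_n}$, and $-1$ at $e^{\tau}$. Assuming genuine corners (i.e.\ $\alpha_{l+1}\neq\alpha_l$, which one may impose without loss of generality by merging equal-slope segments), each of these coefficients is nonzero, so $|\Re f|\to\infty$ as $z\to x_0$, and $f$ cannot vanish near $x_0$ either.

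The only real technical content is the bookkeeping in the case analysis: one has to correctly identify, in each of the cases above, which single logarithm of \eqref{eq:SPrime} has arguments straddling the given $x$, and what its coefficient is. Once that is carried out, the conclusion collapses to the uniform strict inequality $|\alpha_k|<\tfrac12$ coming from $a_k,b_k\in\Z_{>0}$.
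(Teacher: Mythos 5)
Your proof is correct and follows essentially the same route as the paper's: on each open subinterval exactly one logarithm of \eqref{eq:SPrime} straddles the point, so $\Im f(x+i0^{+})=\pm c\pi$ with $c\neq 0$ because $|\alpha_k|<\tfrac12$, and the corner points $e^{A_l}$ and $e^{\tau}$ are excluded by the singular behavior of $f$ there. Your handling of the exceptional points (nonzero net coefficient of $\log(z-x_0)$ forcing $|\Re f|\to\infty$) is just a slightly more explicit version of the paper's remark that $f$ is singular at those points.
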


\begin{proof}

Consider $z=x+i\epsilon$, where $x\in(e^{A_{i-1}},e^{A_i})$ and $|\epsilon|<<1$. Looking at the terms in $f(z)$ in formula (\ref{eq:SPrime}), it is easy to see that for such $z$ we have $$\Im\left(f(z)-c\log\left(\frac{z-e^{A_i}}{z-e^{A_{i-1}}}\right)\right) = O(\epsilon),$$ where $c\neq 0$ is the coefficient of $\log\left(\frac{z-e^{A_i}}{z-e^{A_{i-1}}}\right)$ in $f(z)$. Since $\log$ is the branch of the logarithm with an imaginary part in $(-\pi,\pi)$ with a cut along $\mathbb{R}_{-}$, we have that $\Im\left(c\log\left(\frac{z-e^{A_i}}{z-e^{A_{i-1}}}\right)\right) = \pm c \pi + O(\epsilon)$, which in turn implies $\Im(f(z)) = \pm c \pi + O(\epsilon)$. This shows that there are no solutions to $f(z)=0$ in $(e^{A_0},e^{A_n})$.

The only remaining points are $e^{A_j}$, but the function $f(z)$ is singular at
these points, and therefore they cannot be solutions either.
\end{proof}

\begin{Lemma} \label{lem:noout}
$f'(z)=0$ has exactly one real solution outside the interval $[e^{A_0},e^{A_n}]$ (possibly at $\infty$).
\label{prop:T(z)=0OneSoln}
\end{Lemma}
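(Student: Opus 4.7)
The plan is to reduce the counting problem to a bounded interval via a rational substitution that eliminates the pole at $z = e^\tau$. Write $x_k = e^{A_k}$ and $y = e^\tau$, so that $x_{j-1} < y < x_j$.

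First, I would differentiate the expression~\eqref{eq:SPrime} and collect terms using $\alpha_0 = -\tfrac12$ and $\alpha_{n+1} = \tfrac12$ (so that $\sum_{k=0}^n (\alpha_{k+1}-\alpha_k) = 1$) to obtain the partial-fraction form
\[
f'(z) = \sum_{k=0}^n \frac{\alpha_{k+1}-\alpha_k}{z-x_k} - \frac{1}{z-y}.
\]
Then I would introduce the dual variables $u = 1/(z-y)$ and $u_k = 1/(x_k - y)$, and verify by a direct calculation that
\[
(z-y)f'(z) = -u\,\tilde T(u), \qquad \tilde T(u) := \sum_{k=0}^n \frac{\alpha_{k+1}-\alpha_k}{u - u_k}.
\]
Thus, for $z \neq y$, the equation $f'(z) = 0$ is equivalent to $u\,\tilde T(u) = 0$: either $u=0$ (i.e.\ $z = \infty$) or $\tilde T(u) = 0$.

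Next, I would check the geometric picture. The map $z \mapsto u$ is a homeomorphism of $\mathbb{R}\setminus[x_0,x_n]$ onto $(u_0,u_n)\setminus\{0\}$ sending $z=\pm\infty$ to $u=0$, and the sorted order of the poles of $\tilde T$ is $u_{j-1} < \cdots < u_0 < 0 < u_n < \cdots < u_j$, so none of them lies in the open interval $(u_0,u_n)$; in particular $\tilde T$ is smooth there. The residues of $\tilde T$ at the boundary poles $u_0$ and $u_n$ are $\alpha_1+\tfrac12$ and $\tfrac12-\alpha_n$, both strictly positive, so $\tilde T(u_0^+) = +\infty$ and $\tilde T(u_n^-) = -\infty$; the intermediate value theorem then gives at least one zero of $\tilde T$ on $(u_0,u_n)$. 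The ``solution at $\infty$'' case in the lemma corresponds to $\tilde T(0) = -\sum_k(\alpha_{k+1}-\alpha_k)(x_k-y) = 0$, in which case the interior zero coincides with $u=0$.

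Finally, I would establish uniqueness by proving that $\tilde T$ is strictly decreasing on $(u_0,u_n)$, which combined with the previous step forces exactly one zero. This amounts to
\[
-\tilde T'(u) = \sum_{k=0}^n \frac{\alpha_{k+1}-\alpha_k}{(u-u_k)^2} > 0, \quad u \in (u_0, u_n).
\]
For $u$ in this interval, $u_0$ and $u_n$ are the nearest of all the $u_k$ to $u$, so their positive contributions (together with any other outer-corner contributions, $\alpha_{k+1}>\alpha_k$) are expected to dominate the negative contributions from inner corners ($\alpha_{k+1}<\alpha_k$) that sit farther out. Combined with the normalization $\sum_k(\alpha_{k+1}-\alpha_k) = 1$, this nearest-pole dominance should yield the inequality.

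\textbf{Main obstacle.} The technical heart of the proof is the strict positivity of $-\tilde T'$ on $(u_0,u_n)$ in the presence of inner corners. A term-by-term bound $|p_k|/(u-u_k)^2 < |p_k|/(u-u_0)^2$ (or $|p_k|/(u-u_n)^2$) is not sharp enough when the inner corners have large $|\alpha_{k+1}-\alpha_k|$; one must group each inner-corner pole with its enclosing outer corners and exploit the strict inequality $|u-u_k| > |u - u_0|$ (or $|u-u_n|$) quantitatively, using the distances $u_0 - u_k$ and $u_k - u_n$ to absorb the negative terms.
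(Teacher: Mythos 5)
Your setup is sound: the partial--fraction formula $f'(z)=\sum_{k=0}^n\frac{\alpha_{k+1}-\alpha_k}{z-e^{A_k}}-\frac{1}{z-e^\tau}$ does follow from \eqref{eq:SPrime} (it is the paper's second expression for $G$), the substitution $u=1/(z-e^\tau)$ gives $f'(z)=-u^2\,\tilde T(u)$ as you claim, the pole ordering $u_{j-1}<\dots<u_0<0<u_n<\dots<u_j$ is correct, the $z=\infty$ case is correctly identified with $\tilde T(0)=0$, and the sign change of $\tilde T$ between $u_0^+$ and $u_n^-$ gives at least one zero. But the lemma asserts \emph{exactly} one solution, and the only thing standing between ``at least one'' and ``exactly one'' is the strict inequality $-\tilde T'(u)=\sum_k\frac{\alpha_{k+1}-\alpha_k}{(u-u_k)^2}>0$ on $(u_0,u_n)$ --- which you explicitly do not prove, and which you yourself flag as the unresolved obstacle. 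Since inner corners contribute negative terms of magnitude up to nearly $1$, no term-by-term bound works, so as written the proposal does not prove the lemma; this is a genuine gap, not a finishing detail.

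The good news is that the missing inequality is true and has a short proof, essentially by summation by parts rather than ad hoc grouping: split the sum into the poles left of the interval ($k\le j-1$) and right of it ($k\ge j$). On the right, $|u-u_n|\le|u-u_{n-1}|\le\dots\le|u-u_j|$ and the tail sums $\sum_{k\ge m}(\alpha_{k+1}-\alpha_k)=\tfrac12-\alpha_m$ are strictly positive (non-lattice slopes), so Abel summation makes the right-hand block strictly positive; symmetrically on the left the partial sums are $\tfrac12+\alpha_m>0$. These positive combinations $\tfrac12\pm\alpha_m$ are exactly the coefficients the paper arranges to appear from the start: its proof works with $G(z)=(z-e^{A_j})(z-e^\tau)f'(z)$ in the decomposition \eqref{eq:G}, where every single term is manifestly decreasing on $\mathbb{R}\setminus[e^{A_0},e^{A_n}]$ (including through $\infty$), and uniqueness follows from monotonicity of $G$ together with its signs at $e^{A_0}-$ and $e^{A_n}+$. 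So either carry out the summation-by-parts argument above in your $u$-coordinates, or adopt the paper's pre-grouped decomposition; without one of these steps the uniqueness claim is unsupported.
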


\begin{proof}
Recall that, by definition, the equation $f'(z)=0$ means
\begin{align}
f'(z) = \begin{cases}
\frac{d}{dz} f(z)=0 &  z \neq \infty \\
\lim_{z \rightarrow \infty} z^2 \frac{d}{dz} f(z)=0 & z=\infty.
\end{cases}
\end{align}
Thus solving $f'(z)=0$ in $\br  \cup \{ \infty \} \backslash [e^{A_0},e^{A_n}]$ is equivalent to solving $G(z)=0$ in this region, where
\begin{equation}
G(z) := (z-e^{A_j})(z-e^\tau)f'(z).
\end{equation}
When $\tau\in (A_{j-1}, A_j)$ we have
\begin{align}
&G(z)=\label{eq:G}
-\sum_{k=1}^{j-1}
	\left(\frac{1}{2}+\alpha_k\right) (e^{A_k}-e^{A_{k-1}})
		\frac{(z-e^{A_j})(z-e^\tau)}{(z-e^{A_k})(z-e^{A_{k-1}})} \\ \nonumber -&
\left(\frac{1}{2}+\alpha_j\right) (e^{\tau}-e^{A_{j-1}})
	\frac{(z-e^{A_j})(z-e^\tau)}{(z-e^{\tau})(z-e^{A_{j-1}})}
+\left(\frac{1}{2}-\alpha_j\right) (e^{A_j}-e^{\tau})
	\frac{(z-e^{A_j})(z-e^\tau)}{(z-e^{A_j})(z-e^{\tau})}
\\ \nonumber +&
	\sum_{k=j+1}^{n}
		\left(\frac{1}{2}-\alpha_k\right) (e^{A_k}-e^{A_{k-1}})
		\frac{(z-e^{A_j})(z-e^\tau)}{(z-e^{A_k})(z-e^{A_{k-1}})}.
\end{align}
Notice that each term has the form
\begin{equation} \nonumber
c\frac{(z-u_1)(z-u_2)}{(z-v_1)(z-v_2)}
\end{equation}
where $$c<0, u_1,u_2>v_1,v_2,\text{ OR }c>0, u_1,u_2<v_1,v_2,$$ and $$z>u_1,u_2,v_1,v_2,\text{ OR }z<u_1,u_2,v_1,v_2.$$

Under these conditions $$\frac{d}{dz}\left(c\log
\left(\frac{(z-u_1)(z-u_2)}{(z-v_1)(z-v_2)}\right)\right)= c\left(\frac{1}{z-u_1}+\frac{1}{z-u_2}-
\frac{1}{z-v_1}-\frac{1}{z-v_2}\right)<0,$$ for $z \in \br \backslash [e^{A_0},e^{A_n}]$,
so $G(z)$ is decreasing in $z$ for all $z \in \br \backslash [e^{A_0},e^{A_n}]$.
Furthermore, it is clear from formula \eqref{eq:G} that $G(z)$ is regular
at $\infty$. Thus $G(z)$ is decreasing as one moves along $(e^{A_n}, \infty]$,
and then ``wraps around infinity'' and moves from $-\infty$ to  $e^{A_0}$.

The function $G(z)$ can also be written as
\begin{equation*}
G(z)= (z-e^{A_j})(z-e^\tau) \left(
\frac{\frac{1}{2}+\alpha_1}{z-e^{A_0}}
-\sum_{k=1}^{n-1} \frac{\alpha_k-\alpha_{k+1}}{z-e^{A_k}}
+\frac{\frac{1}{2}-\alpha_n}{z-e^{A_n}}
-\frac{1}{z-e^\tau} \right).
\end{equation*}

Using this formula and the fact that $\frac{1}{2}+\alpha_1>0$ and $\frac{1}{2}-\alpha_n>0$, one can see that
\begin{equation*}
\lim_{z\to e^{A_0}-} G(z)<0, \lim_{z\to e^{A_n}+}G(z)>0.
\end{equation*}
this completes the proof of the Lemma.
\end{proof}

\begin{proof}[Proof of Proposition \ref{prop:one_double}]
As $\chi$ varies, the number of non-real critical points can change only
at the point $\chi_0$ when there is a double solution to $f(z)=0$ on
the real line, or at infinity. Lemma \ref{lem:noin} shows that this never
happens with $z \in [e^{A_0},e^{A_n}]$.

Lemma \ref{lem:noout} shows that $f'(z)=0$ has exactly one solution $z_0 \in \br \backslash [e^{A_0},e^{A_n}]$ (or possibly at $\infty$). Since $f(z)=g(z)+\chi$,
where $g(z)$ does not depend on $\chi$, it implies that there is exactly one
point $\chi_0=-g(z_0)$ where $f(z)=f'(z)=0$ and that this happens when $z=z_0$.
\end{proof}

\begin{proof}[Proof of Proposition \ref{prop:regions}]
By Lemma \ref{lem:largeXcrit}, for sufficiently large $\chi$ there are exactly
two critical points of $S_{\chi,\tau}(z)$ with $\Im(z)\neq 0$. By Lemma
\ref{lem:smallXcrit}, for sufficiently negative $\chi$ there are no such
critical points.  Taking into account Proposition \ref{prop:one_double},
this implies there is a unique value $\chi_0$ such that all critical points of
$S(z)$ are real for $\chi<\chi_0$ and there is a pair of complex conjugate
critical points $z_c, \bar{z_c}$ with $\Im(z_c)>0$ for $\chi>\chi_0$. This
completes the proof of Proposition \ref{prop:regions}. 
\end{proof}

\subsection{The asymptotic of critical points near corners}
The following proposition describes what happens in situations interpolating
the two regimes described in Proposition \ref{prop:regions}.

\begin{prop}\label{prop:near_corner}
  Complex conjugate solutions to (\ref{eq:t-scaling}) have the
following asymptotic when $p\to 0$:

\[
s=p^{-\frac{1}{1+\alpha_j-\alpha_{j+1}}}\exp\left(\frac{\pm i\pi(\frac{1}{2}+\alpha_j)}{1+\alpha_j-\alpha_{j+1}}\right)(1+O(p)).
\]

If $p\to \infty$ the asymptotic is different for different signs of $\delta$:
\begin{enumerate}
\item When $\delta\to +0$
\[
s=p^{-1}\exp\left(\pm i\pi(\frac{1}{2}+\alpha_{j+1})\right)\left(1+O\Bigl(\frac{1}{p}\Bigr)\right).
\]
\item When $\delta\to -0$
\[
s=p^{-1}\exp\left(\pm i\pi(\frac{1}{2}+\alpha_j)\right)\left(1+O\Bigl(\frac{1}{p}\Bigr)\right).
\]
\end{enumerate}
\end{prop}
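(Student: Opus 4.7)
The plan is to work directly with equation (\ref{eq:t-scaling}), taking logarithms and performing standard asymptotic analysis in each of the two regimes. Writing (\ref{eq:t-scaling}) in log form,
\begin{equation*}
  \log p = \pm i\pi(\tfrac12 + \alpha_j) + (\alpha_{j+1}-\alpha_j)\log\bigl(s-\mathrm{sign}(\delta)\bigr) - \log s,
\end{equation*}
I would identify in each regime which factor dominates, extract the leading power of $p$, and then capture the correction by substituting $s=s_0(1+\eta)$ into the exact equation and linearizing.

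For $p\to 0$: The right-hand side is bounded unless $|s|$ degenerates, and since the $-\log s$ term must absorb the large $|\log p|$ with the correct sign, one needs $|s|\to \infty$. Factor
\begin{equation*}
  (s-\mathrm{sign}(\delta))^{\alpha_{j+1}-\alpha_j} = s^{\alpha_{j+1}-\alpha_j}\bigl(1-\mathrm{sign}(\delta)/s\bigr)^{\alpha_{j+1}-\alpha_j},
\end{equation*}
so the leading equation becomes $p\sim e^{\pm i\pi(\frac12+\alpha_j)}\,s^{-(1+\alpha_j-\alpha_{j+1})}$, and solving gives the claimed leading term. Writing $s=s_0(1+\eta)$ with $s_0$ equal to the stated leading expression and plugging back, one linearizes in $\eta$ and $1/s_0$ to obtain $\eta=O(1/s_0)$, which is absorbed into the multiplicative $(1+O(p))$ factor of the proposition.

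For $p\to+\infty$: Now $|s|\to 0$, so $(s-\mathrm{sign}(\delta))^{\alpha_{j+1}-\alpha_j}$ tends to its value at $s=0$. When $\delta<0$ this limit is $1$ and one immediately reads off $s\sim p^{-1}\,e^{\pm i\pi(\frac12+\alpha_j)}$. When $\delta>0$, the limit is $(-1)^{\alpha_{j+1}-\alpha_j}$, and the branch must be fixed consistently with the derivation of (\ref{eq:t-scaling}) from (\ref{eq:CritPtsNearCorn}): for the solution of positive imaginary part, $s-1$ approaches $-1$ from the upper half-plane, so $\arg(s-1)\to+\pi$ and $(-1)^{\alpha_{j+1}-\alpha_j}=e^{i\pi(\alpha_{j+1}-\alpha_j)}$. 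Multiplying phases turns $e^{i\pi(\frac12+\alpha_j)}$ into $e^{i\pi(\frac12+\alpha_{j+1})}$, giving the stated formula; the conjugate solution corresponds to the lower half-plane and the opposite sign. The multiplicative error $1+O(1/p)$ is then obtained from the next term in the Taylor expansion of $(1-s/\mathrm{sign}(\delta))^{\alpha_{j+1}-\alpha_j}$ about $s=0$, using $s=O(1/p)$.

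The principal obstacle is the branch bookkeeping for the multivalued power $(s-\mathrm{sign}(\delta))^{\alpha_{j+1}-\alpha_j}$, especially in the $p\to\infty$, $\delta\to+0$ sub-case where the base passes near $-1$ and the change of branch is precisely what produces the \emph{phase shift} from $\alpha_j$ to $\alpha_{j+1}$ in the answer. Once the branch is fixed via (\ref{eq:CritPtsNearCorn}), the rest of the argument is a routine one-line asymptotic expansion in each regime.
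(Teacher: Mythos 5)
Your expansion in the regimes $s\to 0$ and $|s|\to\infty$, including the branch bookkeeping that converts the phase $e^{\pm i\pi(\frac12+\alpha_j)}$ into $e^{\pm i\pi(\frac12+\alpha_{j+1})}$ when $\delta\to+0$, is correct and coincides with the easy part of the paper's argument. But there is a genuine gap at the very first step of each regime: you \emph{assert} that $p\to\infty$ forces $|s|\to 0$ and that $p\to 0$ forces $|s|\to\infty$, and neither assertion follows from (\ref{eq:t-scaling}) alone. The right-hand side of (\ref{eq:t-scaling}) also blows up as $s\to\mathrm{sign}(\delta)$ when $\alpha_{j+1}-\alpha_j<0$ (an inner corner), and tends to $0$ as $s\to\mathrm{sign}(\delta)$ when $\alpha_{j+1}-\alpha_j>0$; so in each regime there is a second potential family of solutions accumulating at the singular point of the power factor, and your heuristic that ``the $-\log s$ term must absorb the large $|\log p|$'' ignores that the term $(\alpha_{j+1}-\alpha_j)\log(s-\mathrm{sign}(\delta))$ can do the absorbing instead. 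Excluding this branch is precisely the content of the paper's proof: in the representative case $p\to\infty$, $\delta>0$, one writes $s=1+te^{i\theta}$ with $\theta\in(0,\pi)$ (this constraint comes from $\Im(\varepsilon)\in(0,\pi)$, i.e.\ from which critical point of $S$ one is tracking), and then the real part of (\ref{eq:CritPtsNearCorn}) forces $t\to0$ while its imaginary part forces $\theta=\pi\frac{\alpha_j+\frac12}{\alpha_j-\alpha_{j+1}}+O(t)+O(\delta)$, which exceeds $\pi$ because $0<\alpha_j-\alpha_{j+1}<\alpha_j+\frac12$ --- a contradiction. Without some argument of this kind (using the argument/half-plane constraint on $\varepsilon$, not just the modulus of (\ref{eq:t-scaling})), your proof does not establish that the complex conjugate solutions in question have the stated asymptotics, since a priori they could be the ones collapsing onto $s=\mathrm{sign}(\delta)$.

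A smaller point: your error estimate in the $p\to0$ case gives $\eta=O(1/s_0)=O\bigl(p^{1/(1+\alpha_j-\alpha_{j+1})}\bigr)$, which is $O(p)$ only when $\alpha_j\le\alpha_{j+1}$; claiming it ``is absorbed into the $(1+O(p))$ factor'' in general is loose (though no looser than the statement itself). The main issue remains the missing exclusion of the $s\to\mathrm{sign}(\delta)$ branch in all three regimes.
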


\begin{proof}
We will do the case when $p\rightarrow \infty, \delta>0$. The other cases are similar. From (\ref{eq:t-scaling}) we see that if $p\rightarrow\infty$ then either $s\rightarrow 0$ or, if $(\alpha_{j+1}-\alpha_j)<0$, $s\rightarrow 1$. If $s\rightarrow 0$, then $s$ can be easily calculated to have the asymptotics given in the statement of the proposition. We need to show that the case $s\rightarrow 1$ is not possible. 

We have $z=e^{\tau-\varepsilon}, \varepsilon\rightarrow 0$, and we can assume that $Im(\varepsilon)\in(0,\pi)$.

Write $s$ as $s=1+te^{i\theta}$. Since $Im(\varepsilon)\in(0,\pi)$, we can assume $\theta\in(0,\pi)$ as well.

Let's rewrite (\ref{eq:CritPtsNearCorn}) as 
\begin{align*}
&\chi - \chi^{(j)}+(1+\alpha_j-\alpha_{j+1})\log(\delta)=
\\&=-(\alpha_j+\frac 12)\log(-s) + (\frac 12 - \alpha_j)\log(\frac{s-1}{s}) - (\frac 12 - \alpha_{j+1})\log(s-1) + O(\delta)
\\&=-(\alpha_j+\frac 12)\log(-1-te^{i\theta}) + (\frac 12 - \alpha_j)\log(\frac{te^{i\theta}}{1+te^{i\theta}}) - (\frac 12 - \alpha_{j+1})\log(te^{i\theta}) + O(\delta).
\end{align*}

Looking at the real part of this equation we get

\begin{equation*}
\chi - \chi^{(j)}+(1+\alpha_j-\alpha_{j+1})\log(\delta)
= (\frac 12 - \alpha_j)\log(t) - (\frac 12 - \alpha_{j+1})\log(t) + O(1).
\end{equation*}

From here we get

$$t=e^\frac{\chi-\chi^{(j)}+ (1+\alpha_j-\alpha_{j+1}) \log(\delta)}{\alpha_{j+1}-\alpha_j}\rightarrow 0.$$

Let's look at the imaginary part. We have
$$0 = -(\alpha_j+\frac 12)(-\pi+O(t))
 + (\frac 12 - \alpha_j)(\theta - O(t)) - (\frac 12 - \alpha_{j+1})\theta + O(\delta).$$

From here 

$$\theta=\pi\frac{\alpha_j+\frac 12}{\alpha_j-\alpha_{j+1}} + O(t) + O(\delta).$$

But this is impossible because $$\theta\in(0,\pi)$$ and $$0<\alpha_j-\alpha_{j+1}<\alpha_j+\frac 12.$$

\end{proof}

These asymptotics for $s$ as a function of $p$ imply the
following asymptotical formulae for complex conjugate critical
points $z=e^{\tau-\e}$of $S(z)$.

\begin{itemize}
\item When $\chi\to +\infty$ and $\delta\to 0$ such that
$e^\chi |\delta|^{1+\alpha_j-\alpha_{j+1}}\to 0$, we have
\[
\e=e^{\pm i\pi \frac{\alpha_j+\frac{1}{2}}{1+\alpha_j-\alpha_{j+1}}}
e^{-\frac{\chi}{1+\alpha_j-\alpha_{j+1}}}(1+o(1)).
\]

\item When $\chi\to +\infty$ and $\delta\to +0$, such that $p\to\infty $,
\[
\e=e^{\pm i\pi (\alpha_j+\frac{1}{2})}e^{-\chi}|\delta|^{\alpha_{j+1}-\alpha_j}
(1+o(1)).
\]

\item When $\chi\to +\infty$ and $\delta\to -0$, such that $p\to \infty $,
\[
\e=e^{\pm i\pi (\alpha_{j+1}+\frac{1}{2})}e^{-\chi}|\delta|^{\alpha_{j+1}-\alpha_j}
(1+o(1)).
\]
\end{itemize}

Notice that the limit $\chi\to \infty$ of complex conjugate critical points for fixed $\tau$ given by (\ref{eq:crit_point_asymp}) agrees with the last two asymptotics when $\tau\to A_j \pm 0$.

\section{Correlation functions, the frozen boundary, and the limit shape}\label{sec:corr}

In this section we will study correlation functions in the limit of the infinitely large system. In particular, the one-point correlation function gives the macroscopic density of horizontal rhombi.

In our analysis we will follow \cite{OR1} and \cite{OR2}. The basic idea is to use the steepest descent method for computing the asymptotical behavior of the double integral defining the correlation kernel.

It has been shown in \cite{OR1} that if $(\tau,\chi)$ is
such that all critical points of
$S(z)$ are real, the region in the vicinity of
this point is \emph{frozen}, i.e. in terms of tiling by rhombi
it is tiled with probability 1 by rhombi of one type
(tilted to the left, tilted to the right, or horizontal).

The region where two simple real critical points collapse into one
degenerate critical point is a curve in the $(\tau,\chi)$-plane
which separates the \emph{frozen region} from the \emph{disordered region}, where
the function $S(z)$ has a pair of complex conjugate simple
critical points.
This curve is the analogue of the arctic circle for tilings of large regular hexagons~\cite{CohnLarsenPropp}.

As it follows from Proposition \ref{prop:regions}, and in opposition to what occurs in \cite{OR1}, the boundary of the frozen region in our case projects bijectively to the interval $(A_0,A_n)$ on the $\tau$ axis. As we will see below, the region above this curve is disordered, the region below is
frozen.

\begin{figure}
  \begin{center}
  \includegraphics[width=10cm]{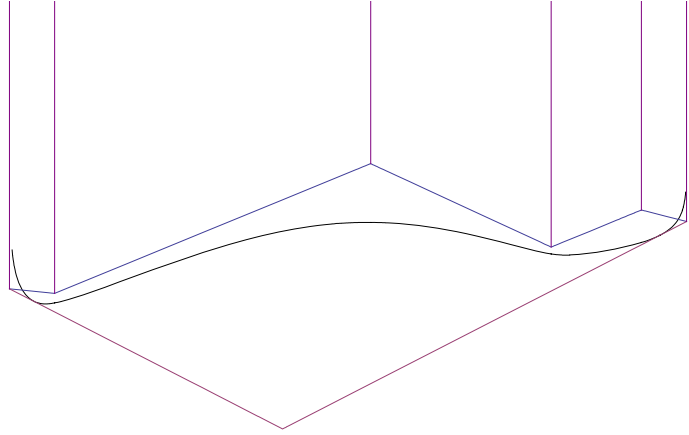}
   \caption{The frozen boundary when the parameters are \newline $A = \{-8, -7, 0, 4, 6, 7\}$, $\alpha = \{-0.1, 0.4, -0.45, 0.4, -0.25\}$ }
\label{fig:FrozenBoundary2}
  \end{center}
\end{figure}

\subsection{Local correlation functions in the bulk of the disordered region}\label{subsec:corr-disord}
In this section we will prove the following theorem.

\begin{Theorem}\label{thm:cor_beta}
In the limit when $r$ goes to $0$, the correlation functions of the system near a point $(\chi, \tau)$ in the bulk are given by determinants of the incomplete beta kernel
\begin{equation}\label{eq:lim_kernel}
  K_{\tau,\chi}^b(\Delta t, \Delta h) = \int_\gamma (1- e^{-\tau} z)^{\Delta t} z^{-\Delta h +\frac{\Delta t}{2}} \frac{dz}{2i\pi z},
\end{equation}
where the integration contour connects the two non-real critical points of $S_{\tau,\chi}(z)$, passing through the real line in the interval $(0,1)$ if $\Delta(t)\geq 0$ and through $(-\infty, 0)$ otherwise.
\end{Theorem}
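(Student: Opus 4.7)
The argument follows the two-dimensional saddle-point and contour-collapse strategy of \cite{OR1,OR2}. Writing the kernel of Theorem~\ref{thm:OR2_main} in the exponential form
\begin{equation*}
K\bigl((t_1,h_1),(t_2,h_2)\bigr)
=\frac{1}{(2i\pi)^2}\oint_{C_1}\oint_{C_2}
e^{[S^{(r)}_{t_1,h_1}(z)-S^{(r)}_{t_2,h_2}(w)]/r}
\,\frac{\sqrt{zw}}{z-w}\,\frac{dz\,dw}{zw},
\end{equation*}
with $rt_i\to\tau$, $rh_i\to\chi$ and $\Delta t,\Delta h$ held fixed, the limiting phase $S_{\tau,\chi}(z)$ possesses a pair of complex-conjugate simple saddles $z_c,\bar z_c$ by Proposition~\ref{prop:regions}, since $(\tau,\chi)$ lies in the disordered bulk.

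The first main step is to deform $C_1$ and $C_2$ onto the steepest-descent contours of $\mathrm{Re}\,S_{\tau,\chi}$ through $z_c$ and $\bar z_c$, arranging the real-axis crossing of the resulting common arc $\gamma$ to lie in $(0,1)$ when $\Delta t\ge 0$ and in $(-\infty,0)$ otherwise. In doing so the $z$-contour is swept across the $w$-contour, and the simple pole at $z=w$ contributes a single residue integral
\begin{equation*}
\pm\frac{1}{2i\pi}\int_\gamma
\frac{\Phi_-(w,t_1)\Phi_+(w,t_2)}{\Phi_+(w,t_1)\Phi_-(w,t_2)}
\,w^{-\Delta h-(b(t_1)-b(t_2))-2}\,dw.
\end{equation*}
The second step is to show that the remaining double integral over the merged contours vanishes in the limit: along $\gamma$ the two integrands are complex conjugate, their leading Gaussian saddle contributions cancel exactly, and $\mathrm{Re}(S(z)-S(w))$ is strictly negative away from the saddles so that the tails decay exponentially in $1/r$.

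The third step is to evaluate the residue in the scaling limit. Within the segment $(A_{j-1},A_j)$ the points of $D^\pm$ form a staircase of period $a_j+b_j$ with proportions $\frac{1}{2}\pm\alpha_j$, so that by the same counting argument as in Lemma~\ref{lem:CD}, for $t_1>t_2$,
\begin{equation*}
\frac{\Phi_-(w,t_1)}{\Phi_-(w,t_2)}
\longrightarrow
(1-w^{-1}e^\tau)^{(\frac{1}{2}+\alpha_j)\Delta t},
\qquad
\frac{\Phi_+(w,t_2)}{\Phi_+(w,t_1)}
\longrightarrow
(1-we^{-\tau})^{(\frac{1}{2}-\alpha_j)\Delta t},
\end{equation*}
while $b(t_1)-b(t_2)\to-\alpha_j\Delta t$. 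The identity $1-w^{-1}e^\tau=-(e^\tau/w)(1-we^{-\tau})$ collapses the product of $\Phi$-ratios into $(1-we^{-\tau})^{\Delta t}$ times a monomial in $w$, which combines with the remaining powers of $w$ to produce precisely the factor $(1-e^{-\tau}w)^{\Delta t}w^{-\Delta h+\Delta t/2}\,dw/w$ of the incomplete beta kernel~\eqref{eq:lim_kernel}, up to the relabelling $w\leftrightarrow z$.

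\textbf{Main obstacle.} The most delicate point is the cancellation in the double-integral remainder: each of the $z$- and $w$-saddle contributions is individually of order $1$, and only the conjugate nature of the two factors along the merged contour makes them annihilate to leading order. This requires a local expansion of $S_{\tau,\chi}$ around $z_c$ and $\bar z_c$ that is uniform in $(\Delta t,\Delta h)$, together with a careful verification that the steepest-descent directions at the two saddles match tangentially on $\gamma$. A secondary subtlety is the interpretation of the non-integer exponents $(\frac{1}{2}\pm\alpha_j)\Delta t$: one first fixes the residue classes of $t_1,t_2$ modulo $a_j+b_j$ before passing to the limit, so that the prelimit exponents are genuine integers and the identity used to collapse the factors is unambiguous.
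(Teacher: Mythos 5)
Your overall strategy (rewrite the kernel with the phase $S^{(r)}/r$, deform contours through the two conjugate saddles, pick up the residue at $z=w$, kill the remaining double integral, and take the limit of the residue integral) is the same as the paper's, but two steps are not correct as you state them. First, the vanishing of the leftover double integral: you deform both contours onto a single ``merged'' arc $\gamma$ and then claim simultaneously that the conjugate Gaussian saddle contributions cancel exactly and that $\Re\bigl(S(z)-S(w)\bigr)$ is strictly negative away from the saddles. These two claims are incompatible: if $z$ and $w$ run over the \emph{same} contour, then $\Re\bigl(S(z)-S(w)\bigr)$ is antisymmetric under exchanging $z$ and $w$ and cannot have a fixed sign off the saddles, and no miraculous cancellation of order-one saddle contributions is needed or available. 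The correct mechanism (used in the paper, following \cite{OR1,OR2}) keeps the two contours distinct: the $z$-contour is pushed to the side where $\Re S(z)<\Re S(z_c)$ and the $w$-contour to the side where $\Re S(w)>\Re S(z_c)$, the only crossings being at $z_c,\bar z_c$, transversally; then $\Re\bigl(S(z)-S(w)\bigr)<0$ everywhere except at the saddles, the double integral is exponentially small, and transversality makes $\frac1{z-w}$ integrable near the crossing points. Your self-identified ``main obstacle'' is an artifact of this misformulation rather than a real difficulty.

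Second, in evaluating the residue integral your intermediate limits are wrong as stated: the exponents in
$\frac{\Phi_-(w,t_1)}{\Phi_-(w,t_2)}$ and $\frac{\Phi_+(w,t_2)}{\Phi_+(w,t_1)}$ are the \emph{integer} counts $N_\mp$ of points of $D^\mp$ in $(t_2,t_1)$, which depend on the residues of $t_1,t_2$ modulo $a_j+b_j$ and are in general not equal to $(\tfrac12\pm\alpha_j)\Delta t$; likewise $b(t_1)-b(t_2)=\tfrac12(N_+-N_-)$, not $-\alpha_j\Delta t$. Fixing the residue classes, as you suggest, makes these integers well defined but does not make them equal to your non-integer expressions. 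What saves the computation is only the identity $N_-+N_+=\Delta t$, which yields $(1-e^{-\tau}w)^{\Delta t}$ and the correct power of $w$ after using $1-w^{-1}e^\tau=-(e^\tau/w)(1-we^{-\tau})$; but this leaves a genuine prefactor $(-e^\tau)^{N_-(t_1)-N_-(t_2)}$, exactly as in the paper's formula \eqref{eq:lim_kernel0}, and you must still argue that it is of the gauge form $g(t_1,h_1)/g(t_2,h_2)$ and hence drops out of every determinant $\det\bigl[K\bigl((t_i,h_i),(t_j,h_j)\bigr)\bigr]$. That gauge argument is part of the statement ``correlation functions are given by determinants of $K^b_{\tau,\chi}$'' and is missing from your write-up.
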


The proof is completely parallel to the similar statements from \cite{OR1,OR2}. Here is the outline.

\begin{proof}
  The correlation functions are given by Formula (\ref{eq:main-corr2}). When $r\rightarrow 0$, the leading asymptotics of the integrand is given by $e^\frac{S_{\tau,\chi}(z)-S_{\tau,\chi}(w)}{r}\frac{1}{z-w}$. In the region where $S_{\tau,\chi}(z)$ has exactly two complex conjugate critical points we calculate the asymptotic of the integral by deforming the contours of integration as follows.

The poles of $\frac{\Phi_-(z,t_1)}{\Phi_+(z,t_1)}$ are real and lie in the interval $(e^{rt_1},\infty)$. The poles of $\frac{\Phi_+(w,t_2)}{\Phi_-(w,t_2)}$ are also real and lie in the interval $(0,e^{rt_2})$.
When $r\rightarrow 0$, they accumulate along the intervals  $(e^\tau,\infty)$ and $(0,e^\tau)$ respectively.

It is clear that any deformation of the contours of integration away from the real line doesn't change the integral, as long as the point where the $z$-contour crosses the positive real line does not move to the right, the point where the $w$-contour intersect the real line does not move to the left and the contours do not cross one another.

Note that whether the $z$-contour is inside the $w$-contour or vice-versa, depends on the sign of $\Delta t=t_2-t_1$. Now look at the level curves of $\Re(S(z))$ passing through $z_{c}$ and $\bar{z}_c$. In the case when the $z$-contour is outside, we can deform the original contours $C_z,C_w$, to $C'_z,C'_w$ as in Figure \ref{fig:ContourDeform}. The only poles we cross are the poles at $z=w$, so we need to take into account the contribution from the residues at $z=w$. We obtain \begin{multline} \label{eq:kernel_deformed_contours}
 K((t_1,h_1),(t_2,h_2))=
\int_{C'_z}\int_{C'_w}\text{same as in }(\ref{eq:main-corr2}) +\\
\int_{\mathcal{C}}\frac{\Phi_-(z,t_1)\Phi_+(z,t_2)}{\Phi_+(z,t_1)\Phi_-(z,t_2)} z^{h_2-h_1+B(t_2)-B(t_1)} \frac{dz}{2i\pi z},
\end{multline}
where $\mathcal{C}$ is an arc connecting points $\bar{z}_{c}$ and $z_{c}$ which crosses the real line on the positive side. In the case when the $w$-contour is inside of the $z$-contour, $\mathcal{C}$ crosses the real line on the  negative side.

\begin{figure}
\includegraphics[width=9cm]{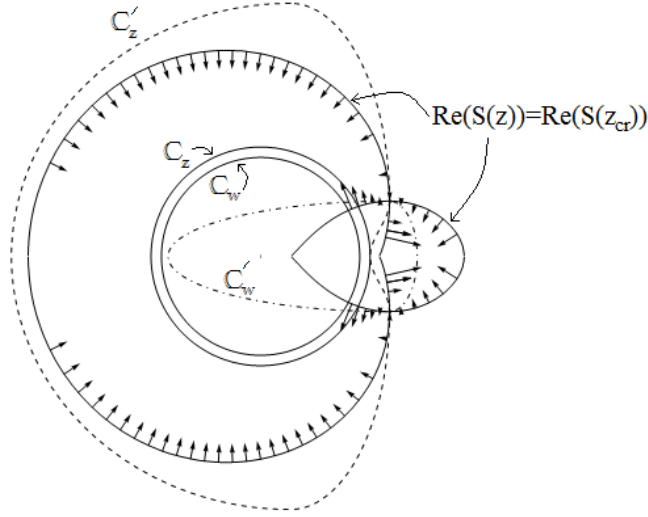}
\caption{  Deformation of contours when the $z$ contour is outside the $w$ contour.
}
\label{fig:ContourDeform}
\end{figure}

Now, let's show that in the limit $r\rightarrow 0$ the first integral is $0$. Since $\lim_{z\rightarrow 0} \Re S(z)=+\infty$, the gradient of the real part of $S(z)$ looks as in Picture \ref{fig:ContourDeform}, which implies that everywhere along the contours $C'_z,C'_w$ except when $z=w=$ critical points, we have $\Re(S(z))<\Re(S(w))$. Because the asymptotically leading term of the integrand is  $e^\frac{S_{\tau,\chi}(z)-S_{\tau,\chi}(w)}{r}\frac{1}{z-w}$, we conclude that the integral is $0$ as $r\to 0$ exponentially fast. Since $z-$ and $w-$ contours intersect transversally, there are no problems with the integrability of $\frac{1}{z-w}$ in the neighborhood of the critical points.

We conclude that the correlation kernel is equal to the second integral in \eqref{eq:kernel_deformed_contours}.

The second integral in \eqref{eq:kernel_deformed_contours}, i.e.
\begin{equation*}
  \int_{\mathcal{C}}\frac{\Phi_-(z,t_1)\Phi_+(z,t_2)}{\Phi_+(z,t_1)\Phi_-(z,t_2)} z^{h_2-h_1+B(t_2)-B(t_1)} \frac{dz}{2i\pi z},
\end{equation*}
is asymptotic, as $r\rightarrow 0$, $t_i r \rightarrow \tau$, $t_1-t_2=\Delta t$, $h_1-h_2=\Delta h$  to:

\begin{equation}\label{eq:lim_kernel0}
  \left(-e^\tau\right)^{N_-(t_1)-N_-(t_2)}\int_{\mathcal{C}}\left(1-e^{-\tau} z\right)^{\Delta t}z ^{-\Delta h + \frac{\Delta t }{2}}\frac{d z }{2i\pi z}
\end{equation}
where $N_-(t_i)$ is the number of descending unit pieces of back wall
between the leftmost corner $A_0$ and $t_i$. The prefactor in front of the
integral is of the form
\begin{equation*}
  \frac{g(t_1,h_1)}{g(t_2,h_2)},
\end{equation*}
which disappears by multi-linearity when we compute correlation functions,
as determinants of the form $\det(K( (t_i,h_i),(t_j,h_j) )   )$.

\end{proof}

The kernel $K_{\chi,\tau}^b$ has a simple expression on the diagonal: when
$\Delta t=\Delta h =0$, we have 
\begin{equation*} 
K_{\chi, \tau}^b(0,0)=\int_{\mathcal{C}}
\frac{dz}{2i\pi z}= \frac{\theta}{\pi},
\end{equation*} 
where $\theta$ is the argument  of
the critical point $z_c$ in the upper-half plane. This quantity is exactly the
density of horizontal tiles. It is therefore the vertical gradient of the limit
shape height function.

The correlation kernel depends only on one complex parameter
$z_c$. It is determined uniquely by coordinates $(\tau,\chi)$
or by the slope of the limit shape at this point.

\begin{cor}
  The distribution of the tiles in the neighborhood of a point $(\chi, \tau)$ converges to the translation invariant ergodic Gibbs measure on tilings of the plane with rhombi, with activities $1$, $e^{-\tau}|z_c|$ and $|z_c|$ for the three types of tiles in absence of a magnetic field.
\end{cor}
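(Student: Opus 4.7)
The proof rests on two inputs. The first is Theorem~\ref{thm:cor_beta}, which establishes that the local correlation functions of the horizontal-tile process near $(\tau,\chi)$ converge as $r\to 0$ to determinants of the incomplete beta kernel
\[
K_{\tau,\chi}^{b}(\Delta t,\Delta h)=\int_{\gamma}(1-e^{-\tau}z)^{\Delta t}\,z^{-\Delta h+\Delta t/2}\,\frac{dz}{2i\pi z},
\]
whose contour $\gamma$ is determined by the non-real critical point $z_c$ of $S_{\tau,\chi}$. The second is the classification of translation invariant ergodic Gibbs measures on lozenge tilings of the plane, stating that for every slope in the interior of the Newton polygon there is exactly one such measure; moreover, it is determinantal, with a kernel of incomplete beta type in which the single complex parameter encodes the edge activities.

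First, I would note that the limiting determinantal process is translation invariant, since $K_{\tau,\chi}^{b}(\Delta t,\Delta h)$ depends only on the increments $(\Delta t,\Delta h)$. The Gibbs property passes to the limit from the finite-volume Gibbs structure of the measure \eqref{eq:distr}: conditional distributions on finite boxes depend only on the boundary data, and this survives the scaling limit. Ergodicity then comes for free from the uniqueness part of the classification, once the slope has been identified.

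Second, I would identify the activities. The slope (i.e.\ the density of horizontal tiles) is already given, in the diagonal case of Theorem~\ref{thm:cor_beta}, by $K_{\tau,\chi}^{b}(0,0)=\theta/\pi$, where $\theta=\arg z_c\in(0,\pi)$. To read off the other two densities, I would perform the change of variable $z=e^{-\tau}w$ in $K_{\tau,\chi}^{b}$; this produces an overall factor $e^{-\tau(-\Delta h+\Delta t/2)}$ of the form $g(t_1,h_1)/g(t_2,h_2)$ which drops out of every correlation determinant by multi-linearity (exactly as in the last paragraph of the proof of Theorem~\ref{thm:cor_beta}). What remains is the standard incomplete beta kernel of the weighted honeycomb dimer model with complex node $e^{-\tau}z_c$. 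For that kernel the densities of the two tilted rhombus types are computed by classical residue calculations at $w=0$ and $w=1$, and yield the activities $(1,e^{-\tau}|z_c|,|z_c|)$ on the three classes of rhombi in the absence of a magnetic field.

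The main obstacle is really bookkeeping rather than a conceptual one: matching the particular gauge in which $K_{\tau,\chi}^{b}$ is written here with the canonical gauge used in the dimer-model literature for a prescribed triple of activities, so that the correspondence between $z_c$ and $(1,e^{-\tau}|z_c|,|z_c|)$ becomes explicit. Once this matching is done, the uniqueness of the translation invariant ergodic Gibbs measure for the given slope identifies the limiting measure with the claimed one, completing the proof.
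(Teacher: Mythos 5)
Your second paragraph is, in substance, the paper's own argument run in the opposite direction: the paper quotes from \cite{Ke:LocStat,KOS} the kernel $K_{a:b:c}$ of the translation invariant ergodic Gibbs measure with activities $a:b:c$ and no magnetic field, substitutes the claimed activities $1$, $e^{-\tau}|z_c|$, $|z_c|$, integrates out $w$ by residues and rescales the remaining variable by $|z_c|$, and obtains exactly the incomplete beta kernel \eqref{eq:lim_kernel}. Your change of variable $z=e^{-\tau}w$ together with the observation that the resulting prefactor is of the gauge form $g(t_1,h_1)/g(t_2,h_2)$ is the same computation, so the identification of the activities is sound.

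The detour through the classification of ergodic Gibbs measures, however, contains a genuine gap, and it is a detour the paper deliberately avoids. The statement that ``ergodicity comes for free from the uniqueness part of the classification'' is circular: uniqueness holds only \emph{within} the class of ergodic Gibbs measures of a given slope, while a translation invariant Gibbs measure whose average density of horizontal tiles is $\theta/\pi$ could a priori be a nontrivial mixture of ergodic measures of different slopes; matching the three one-point densities by residues at $w=0$ and $w=1$ cannot exclude this, so to use the classification you would have to prove extremality (or deterministic slope) of the limit, which is not free. Likewise, the claim that the Gibbs property ``survives the scaling limit'' of the measure \eqref{eq:distr} needs an argument, since the finite-volume weight $q^{|\pi|}$ must be shown to reduce to the uniform specification in a fixed window as $r\to 0$. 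Fortunately none of this is needed: because you match the \emph{full} kernel and not just the densities, all limiting correlation functions coincide with those of the Gibbs measure with activities $(1,e^{-\tau}|z_c|,|z_c|)$, so Theorem \ref{thm:cor_beta} gives convergence of the finite-dimensional distributions; the one step you omit, and which the paper makes explicit, is that compactness of the space of tilings in the product topology upgrades this to convergence of the whole distribution. Replacing your first paragraph by this compactness remark yields precisely the paper's proof.
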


Indeed, from \cite{Ke:LocStat,KOS} the correlation functions for the translation invariant probability measure with activities $a$, $b$, $c$ on rhombus configurations (with no external field) are given by determinants of the correlation kernel
\begin{equation*}
  K_{a:b:c}((t_1,h_1), (t_2,h_2)) = \iint_{{\mathbb T}^2} \frac{z^{-\Delta h+ \frac{\Delta t}{2}}w^{\Delta t}}{a+\frac{1}{w}(b+\frac{c}{z})} \frac{dz}{2i\pi z} \frac{dw}{2i\pi w}.
\end{equation*}
Taking the proposed choice for the activities, and computing the integral over $w$ by residues leads to a simple integral along a sector of the unit circle. After a change of variable in the integral by multiplying by $|z_c|$, one gets the same expression as in \eqref{eq:lim_kernel}, see \cite{OR1}. The convergence of finite dimensional distributions follows from Theorem \ref{thm:cor_beta}. Since the space of tilings is compact for the product topology, the convergence of finite dimensional distributions implies the convergence of the whole distribution.

This result gives a precise description of the local behavior of the system in the neighborhood of a point inside the liquid region of the limit shape. However, the expressions of the critical points $z_c$ and $\overline{z_c}$ are not explicit for a finite $\chi$. In the next subsection, we investigate the asymptotic when $\chi \to \infty$ .

Similar arguments can be applied to the case when all critical points are real. In this case the integral (\ref{eq:main-corr2})
tends either to zero or to one depending on the sign of the critical points. The steepest descent method results in the
following asymptotic:
\[
\rho_{(t_1,h_1),\dots, (t_k,h_k)}=\rho+ O(e^{-\frac{\alpha}{r}}),
\]
where $\alpha>0$,  $\rho=1$ when $(\tau,\chi)$ is below the segment of the boundary of the limit shape between two points where it touches the boundary, and $\rho=0$ when $(\tau,\chi)$ is below the boundary of the limit shape but above the turning points.

\subsection{The correlation functions for large \texorpdfstring{$\chi$}{chi} and the bead model}
In this section, we investigate the behavior of the high piles of cubes of the random skew plane partition, that are close to the back wall.
The asymptotic of local correlation functions as $\chi\to \infty$ depends on whether $\tau$ is in a vicinity of $A_j$ or not.

\subsection{\texorpdfstring{$A_{j-1}<\tau <A_j$}{A\_\{j-1\}<tau<A\_\{j\}}}
In this subsection we consider the case when $A_{j-1}<\tau <A_j$.

  Let 
\begin{equation*}
 t_i = \frac{\tau}{r}+\eta_i, \quad h_i = \frac{\chi}{r}+e^{\chi}\xi_i, \quad i=1,2,
\end{equation*}
so that $\Delta \eta = \Delta t$ and $\Delta \xi = e^{-\chi} \Delta h$.
\begin{thm}
  In the limit where $\chi\to \infty$, such that $\Delta \eta$ and $\Delta\xi$ are fixed, the kernel $K_{\tau,\chi}^b( (t_1,h_1),(t_2,h_2))$ of Equation \eqref{eq:lim_kernel} divided by $e^{\chi}$ and a factor which does not affect the determinant (\ref{eq:det}), converges to
\begin{equation*}
K^{(\gamma)}((\eta_1,\xi_1),(\eta_2,\xi_2))=
\begin{cases}
  {\displaystyle d\int_{-1}^{1} \left(\gamma +i \varphi \sqrt{1-\gamma^2}\right)^{\eta_1-\eta_2} e^{-i\varphi d (\xi_1-\xi_2)} \frac{d \varphi}{2\pi}}, & \eta_1 \geq \eta_2,\\
   \displaystyle -d \int_{\mathbb{R}\setminus[-1,1]} \left(\gamma +i \varphi \sqrt{1-\gamma^2}\right)^{\eta_1-\eta_2} e^{-i\varphi d (\xi_1-\xi_2)} \frac{d \varphi}{2\pi}, & \eta_1 < \eta_2,
 \end{cases}
\end{equation*}
where $\alpha_j$ is the slope of the piece of the back wall  between $A_{j-1}$ and $A_j$, $\gamma=\sin \pi \alpha_j$, and 
\begin{equation*}
 d=\rho(\tau)\cos(\pi\alpha_j)=\cos(\pi\alpha_j)\prod_{k=0}^n \left| 2 \sinh \left(\frac{\tau-A_k}{2}\right)\right|^{\alpha_{k+1}-\alpha_k}.
\end{equation*}
\end{thm}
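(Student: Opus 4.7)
The plan is to perform a change of variable in the integral \eqref{eq:lim_kernel} that zooms in on the neighborhood of $z = e^{\tau}$, where the two complex-conjugate critical points of $S_{\tau,\chi}(z)$ collapse as $\chi \to \infty$. By \eqref{eq:crit_point_asymp}, one has $z_{c} = e^{\tau}\exp(-e^{-\chi}\rho(\tau)\, e^{-i\pi(1/2 + \alpha_j)})(1 + O(e^{-\chi}))$ and its complex conjugate, so the substitution $z = e^{\tau}(1 + e^{-\chi}\rho(\tau)\, w)$ sends the critical points to $w_{c} = \gamma + i\sqrt{1-\gamma^{2}}$ and $\bar w_{c} = \gamma - i\sqrt{1-\gamma^{2}}$ in the limit, using $e^{\pm i\pi(1/2+\alpha_j)} = -\sin(\pi\alpha_j) \pm i\cos(\pi\alpha_j)$ and $\gamma = \sin(\pi\alpha_j)$.

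Next I would expand each factor of the integrand in \eqref{eq:lim_kernel} to leading order in $e^{-\chi}$: the factor $(1 - e^{-\tau} z)^{\Delta t}$ becomes $(-e^{-\chi} \rho(\tau) w)^{\Delta t}(1 + O(e^{-\chi}))$; the power $z^{-\Delta h + \Delta t/2}$ expands as $e^{-\tau e^\chi \Delta\xi + \tau \Delta\eta/2} \exp(-\rho(\tau) w \Delta\xi)(1 + O(e^{-\chi}))$ after using $\Delta h = e^{\chi}\Delta\xi$ to keep $e^{-\chi} w\, \Delta h$ finite in the exponential; and $dz/(2\pi i z) = e^{-\chi}\rho(\tau)\, dw/(2\pi i)(1 + O(e^{-\chi}))$. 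Collecting the prefactors, one obtains $K^{b}_{\tau,\chi} \sim e^{-\chi}\, (G_1/G_2)\, \frac{\rho(\tau)}{2\pi i} \int_{C} w^{\Delta\eta} e^{-\rho(\tau) w\,\Delta\xi}\, dw$, where $G_i := (-e^{-\chi}\rho(\tau))^{\eta_i}\, e^{-\tau e^\chi \xi_i + \tau\eta_i/2}$ depends only on index $i$ and therefore acts by diagonal conjugation on the correlation matrix, a gauge that does not affect the determinants \eqref{eq:det}. After removing both the gauge and the $e^{-\chi}$ prefactor (i.e.\ the $e^\chi$ normalization stated in the theorem), the problem reduces to computing the limiting contour integral in $w$.

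Parametrizing the straight segment between the critical points by $w(\varphi) = \gamma + i\varphi\sqrt{1-\gamma^{2}}$ with $\varphi \in [-1, 1]$, one has $dw = i\sqrt{1-\gamma^{2}}\, d\varphi$ and $-\rho(\tau) w = -\rho(\tau)\gamma - i\varphi d$. The product $\rho(\tau)\sqrt{1-\gamma^{2}} = d$ combines with the Jacobian to produce the overall factor of $d$ appearing in $K^{(\gamma)}$, while $e^{-\rho(\tau)\gamma\,\Delta\xi}$ is another one-index-only factor absorbed into the gauge. This yields $d \int_{-1}^{1}(\gamma + i\varphi\sqrt{1-\gamma^{2}})^{\Delta\eta} e^{-i\varphi d\,\Delta\xi}\, d\varphi/(2\pi)$, matching the claimed kernel when $\eta_1 \ge \eta_2$. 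For $\Delta\eta \ge 0$ the integrand $w^{\Delta\eta}$ is entire, so the original contour passing through $(0,1)$ in the $z$-plane is freely homotopic to this straight segment.

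The main obstacle is the case $\eta_1 < \eta_2$, where $\Delta\eta < 0$ and $w^{\Delta\eta}$ has a pole of order $|\Delta\eta|$ at $w = 0$. For such $\Delta t$, the contour in \eqref{eq:lim_kernel} originally crosses the real axis in $(-\infty, 0)$, which corresponds in the $w$-plane to a crossing at $w \sim -e^{\chi}/\rho(\tau) \to -\infty$. As $\chi \to \infty$ this contour cannot be deformed into the finite straight segment through $w = \gamma$ without crossing the pole at $w = 0$, and the correct limiting contour is the complement of the segment on the vertical line $\operatorname{Re}(w) = \gamma$, corresponding to $\varphi \in \mathbb{R}\setminus[-1,1]$. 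The minus sign in the theorem statement reflects the orientation of this complementary contour, traversed from $\bar w_{c}$ down to $-i\infty$ and then from $+i\infty$ back to $w_{c}$ (the detour around the point at infinity in the $w$-plane is the image of the original detour around $z = 0$). Justifying this degeneration carefully, showing that the portion of the original contour for large $|w|$ contributes exactly the two tails of the vertical line with the stated orientation, and establishing the appropriate (possibly principal-value) convergence of the resulting integral for arbitrary $\Delta\xi$, is the delicate technical step.
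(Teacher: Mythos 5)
Your proposal follows essentially the same route as the paper's proof: zoom in on $z=e^{\tau}$ at scale $e^{-\chi}\rho(\tau)$ (the paper parametrizes the contour as the arc $z(\varphi)=e^{\tau}\exp\bigl(e^{-\chi}\rho(\tau)(\sin\pi\alpha_j+i\varphi\cos\pi\alpha_j)\bigr)$, which agrees with your linear substitution $z=e^{\tau}(1+e^{-\chi}\rho(\tau)w)$ to the relevant order), expand the three factors of the incomplete beta kernel, discard the gauge factors $g_1/g_2$ together with the $e^{-\chi}$ normalization, and for $\eta_1<\eta_2$ use the complementary contour crossing the negative real axis, whose orientation yields the minus sign and whose limit is $\varphi\in\mathbb{R}\setminus[-1,1]$. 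The degeneration you flag as the delicate step in the $\eta_1<\eta_2$ case is handled no more explicitly in the paper (it just notes that the same computation applies with $\varphi$ ranging up to $\pm\pi e^{\chi}/(\rho(\tau)\cos\pi\alpha_j)$), so your argument is correct and at the paper's level of rigor.
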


\begin{proof}
Suppose  $t_1 \geq t_2$.
Recall that the two complex conjugate critical points $z_c$ and $\bar{z}_c$ for this value of $\tau$ have the asymptotics \eqref{eq:crit_point_asymp} when $\chi\to \infty$. The curve $\mathcal{C}$ joining $z_c$ and $\bar{z}_c$ can be chosen as the positively oriented arc of the circle centered at 0, of radius $|z_c|$. A possible parametrization of this arc is:
\begin{equation*}
 z= z(\varphi)= e^ \tau \exp\left(e^{-\chi} \rho(\tau) \left(\sin(\pi\alpha_j)+ i\varphi \cos(\pi\alpha_j)\right)\right)(1+o(1)),
\end{equation*}
where $\varphi$ runs from $-1$ to $+1$.

The expression of the kernel \eqref{eq:lim_kernel} becomes
\begin{align*}
  K_{\chi,\tau}^b(\Delta t, \Delta h) &= e^{-\chi}\rho(\tau) \cos(\pi \alpha_j)
\int_{-1}^{1} \left ( 1-e^{-\tau} z(\varphi)\right )^{\Delta t} \bigl( z(\varphi)\bigr)^{-\Delta h + \Delta t/2}\frac{d\varphi}{2\pi}.
\end{align*}

When $\chi$ goes to infinity, one has
\begin{align*}
 \left( 1-e^{-\tau} z(\varphi)\right)^{t_1-t_2} &= (-e^{-\chi}\rho(\tau))^{\eta_1-\eta_2} \left(\sin(\pi \alpha_j) + i\varphi \cos(\pi \alpha_j)\right)^{\eta_1-\eta_2} (1+o(1)),\\
\bigl( z(\varphi)\bigr)^{-(h_1-h_2)} &= \left(e^{\tau e^\chi}e^{\rho(\tau)\sin(\pi\alpha_j)}\right)^{-(\xi_1-\xi_2)} e^{-i\rho(\tau)\varphi \cos(\pi\alpha_j)(\xi_1-\xi_2)} (1+o(1)).
\end{align*}

The factors
\begin{equation*}
 (e^{-\chi}\rho(\tau))^{\eta_1-\eta_2}\left(e^{\tau e^\chi}e^{\rho(\tau)\sin(\pi\alpha_j)}\right)^{-(\xi_1-\xi_2)}
\end{equation*}
are of the form $\frac{g(\xi_1,\eta_1)}{g(\xi_2,\eta_2)}$. These factors have no effect on the computations of the probabilities, since they cancel out when computing, in the limit, the determinant $\det(K_{\chi,\tau}( t_i-t_j,h_i-h_j))$. We can thus drop them out, and obtain a new kernel defining the same determinantal process.

The integral converges as $\chi \rightarrow +\infty$ to
\begin{equation*}
 \rho(\tau)\cos(\pi \alpha_j)\int_{-1}^{1} \left(\sin(\pi \alpha_j) +i \varphi \cos(\pi\alpha_j)\right)^{\eta_1-\eta_2} e^{-i\varphi \rho(\tau) \cos(\pi\alpha_j)(\xi_1-\xi_2)} \frac{d \varphi}{2\pi}.
\end{equation*}

Similarly, when $t_1 < t_2$, we use the same change of variable $z=z(\varphi)$. But now, $\varphi\in ( -\frac{\pi e^\chi}{\rho(\tau)\cos(\pi\alpha_j)} , -1)\cup ( 1, \frac{\pi e^\chi}{\rho(\tau)\cos(\pi \alpha_j)})$. The same computations as above show that $K_{\chi,\tau}(\Delta t, \Delta h)$ divided by $\frac{g(\xi_1,\eta_1)}{g(\xi_2,\eta_2)}$ converges to
\begin{equation*}
 -\rho(\tau)\cos(\pi \alpha_j)\int_{\mathbb{R}\setminus[-1,1]} \left(\sin(\pi \alpha_j) +i \varphi \cos (\pi \alpha_j)\right)^{\eta_1-\eta_2} e^{-i\varphi \rho(\tau) \cos(\pi\alpha_j )(\xi_1-\xi_2)} \frac{d \varphi}{2\pi}.
\end{equation*}

The kernel divided by $e^{-\chi}$ and some factors $\frac{g(\eta_1,\xi_1)}{g(\eta_2,\xi_2)}$ which cancel in the determinant, converges to
\begin{equation*}
 \begin{cases}
  \displaystyle \rho(\tau)  \cos(\pi\alpha_j) \int_{-1}^{1} \left(\sin(\pi\alpha_j)+i\varphi\cos(\pi\alpha_j)\right)^{\eta_1-\eta_2} e^{-i\cos(\pi\alpha_j)\rho(\tau)(\xi_1-\xi_2)\varphi} \frac{d\varphi}{2\pi}, &  t_1\geq t_2 \\
  \displaystyle -\rho(\tau)  \cos(\pi\alpha_j) \int_{\mathbb{R}\setminus[-1,1]}\!\!\! \left(\sin(\pi\alpha_j)+i\varphi\cos(\pi\alpha_j)\right)^{\eta_1-\eta_2} e^{-i\cos(\pi\alpha_j)\rho(\tau)(\xi_1-\xi_2)\varphi} \frac{d\varphi}{2\pi},  & t_1 < t_2
 \end{cases}
\end{equation*}

This ends the proof. \end{proof}

As a consequence, using the same argument as in \cite{beads}, we have the following theorem
\begin{thm}
 The point process describing the positions of horizontal rhombi in the neighborhood of a non frozen point $(\tau,\chi)$, with $A_{j-1}<\tau <A_j$ in the limit shape $r\rightarrow 0$ converges to the bead process on $\mathbb{Z}\times \mathbb{R}$ with parameter $\gamma=\sin\pi\alpha_j$ and density
\begin{equation*}
  \frac{1}{\pi}\rho(\tau)\cos(\pi\alpha_j) = \frac{1}{\pi} \cos(\pi\alpha_j)\prod_{k=0}^n\left| 2 \sinh\left(\frac{\tau-A_{k}}{2} \right)\right|^{\alpha_{k+1}-\alpha_k}.
\end{equation*}
\end{thm}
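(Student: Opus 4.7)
The strategy is to leverage the preceding theorem, which already delivers the hard analytic content: it shows that the correlation kernel $K_{\tau,\chi}^{b}$ of the horizontal rhombi, after the rescaling $h \mapsto \chi/r + e^{\chi}\xi$, converges (modulo a harmless gauge factor $g(\eta_1,\xi_1)/g(\eta_2,\xi_2)$ that cancels in every minor of the correlation determinant) to the explicit kernel $K^{(\gamma)}$ with $\gamma = \sin\pi\alpha_j$. Since both the prelimit processes and the bead process are determinantal, it suffices to (a) identify $K^{(\gamma)}$ with the bead kernel on $\mathbb{Z}\times\mathbb{R}$ and (b) upgrade pointwise kernel convergence to weak convergence of the point processes.

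For step (a), I would compare $K^{(\gamma)}$ term by term with the formula for the bead kernel $K_{\mathrm{bead}}^{\gamma}$ of \cite{beads}: up to the constant prefactor $d = \rho(\tau)\cos(\pi\alpha_j)$ and the linear change of coordinate $\tilde\xi = d\,\xi$ (which simply absorbs $d$ into a uniform dilation of the real axis), $K^{(\gamma)}$ coincides with $K_{\mathrm{bead}}^{\gamma}$. In particular $\gamma = \sin\pi\alpha_j$ determines which determinantal process appears, while $d$ is the density parameter. To check this is the right identification I would read the two kernels on their diagonal: $K^{(\gamma)}\bigl((\eta,\xi),(\eta,\xi)\bigr) = d\int_{-1}^{1}\frac{d\varphi}{2\pi} = d/\pi$, which is exactly the bead density $\frac{1}{\pi}\rho(\tau)\cos(\pi\alpha_j)$ stated in the theorem.

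For step (b), I would follow verbatim the argument in \cite{beads}: uniform convergence of kernels on compact sets of $\mathbb{Z}\times\mathbb{R}$ implies convergence of all finite-dimensional correlation functions (the $k$-point density is a $k\times k$ determinant), which in turn implies convergence of the Laplace functionals, hence convergence in distribution of the point processes for the vague topology on locally finite configurations of $\mathbb{Z}\times \mathbb{R}$. The gauge factors cause no difficulty since $\det\bigl[K(x_i,x_j)g(x_i)/g(x_j)\bigr] = \det\bigl[K(x_i,x_j)\bigr]$.

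The only subtle point I anticipate is verifying that the convergence of $K_{\tau,\chi}^{b}$ to $K^{(\gamma)}$ is indeed uniform on compact sets in $(\eta,\xi)$, which is needed to guarantee local boundedness of the correlation functions uniformly in $\chi$ and to rule out loss of mass in the limit. This should follow by revisiting the asymptotic expansion performed in the proof of the previous theorem and confirming that the $o(1)$ error terms there are uniform for $(\eta_1,\xi_1),(\eta_2,\xi_2)$ in any fixed compact set; once this is in hand, the remaining arguments are a direct application of the general convergence theory for determinantal point processes as in \cite{beads}.
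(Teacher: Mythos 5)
Your proposal is correct and follows essentially the same route as the paper, which derives this theorem from the preceding kernel-convergence result by invoking ``the same argument as in \cite{beads}'': identify the limiting kernel $K^{(\gamma)}$ with the bead kernel (up to gauge factors and a dilation of the vertical coordinate) and conclude convergence of the determinantal point processes. Your diagonal check giving density $d/\pi=\frac{1}{\pi}\rho(\tau)\cos(\pi\alpha_j)$ and your remark on uniformity of the kernel convergence are exactly the details the paper leaves implicit.
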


Note that the density is singular when at the corners: it tends to $0$ at inner corners (where $\alpha_{k+1} > \alpha_k$), and goes to $\infty$ at outer corners. This is a remnant of the singularities of the limit shape observed in \cite{OR1}.
This suggests that the scaling has to be modified to observe a non trivial phenomenon in the vicinity of the corners $A_k$. 

\subsubsection{$\tau$ is in the vicinity of $A_j$ for $0<j<n$}
Now assume that $\tau=A_j+\delta$, and $\delta\to 0$ as $\chi\to \infty$ in such a way that $p=e^{\chi-\chi^{(j)}}|\delta|^{1+\alpha_j-\alpha_{j+1}}$ is fixed. Recall that in this case, the critical point of $S_{\tau,\chi}(z)$ are given by Equation~\eqref{eq:t-scaling}. 
In this limit the curve $\mathcal{C}$ joining two complex conjugate critical points can be parameterized as
\[
z(\phi)=e^\tau \exp(-t^{\beta_j}e^{-\beta_j(\chi-\chi^{(j)})}(s'+is''\phi)),
\]
where $s'$ and $s''$ are real and imaginary parts of $s$ respectively, and $\beta_j=\frac{1}{1+\alpha_j-\alpha_{j+1}}$.

\begin{thm} \label{thm:scaling}Assume that $\chi\to \infty$ and $\delta\to 0$ as above, and that $\Delta\eta=\Delta t$ and $\Delta\zeta=e^{-\beta_j\chi}\Delta h$ are fixed. Then the correlation kernel \eqref{eq:lim_kernel}, divided by $e^{-\beta_j\chi}$ and some factors not affecting the determinant,  converges to the same expression as in the previous theorem 
\begin{equation*}
K^{(\gamma')}((\eta_1,\zeta_1),(\eta_2,\zeta_2))=
\begin{cases}
 d' \int_{-1}^{1} \left(\gamma' +i \varphi \sqrt{1-\gamma'^2}\right)^{\eta_1-\eta_2} e^{-i\varphi d' (\zeta_1-\zeta_2)} \frac{-d \varphi}{2\pi} & \text{if $\eta_1 \geq \eta_2$}\\
  - d' \int_{\mathbb{R}\setminus[-1,1]} \left(\gamma' +i \varphi \sqrt{1-\gamma'^2}\right)^{\eta_1-\eta_2} e^{-i\varphi d' (\zeta_1-\zeta_2)} \frac{d \varphi}{2\pi} & \text{if $\eta_1 < \eta_2$},
 \end{cases}
\end{equation*}
with
\[
 d'=e^{\beta_j\chi^{(j)}}s'', \quad \gamma'=-\frac{s''}{|s|}.
\]
\end{thm}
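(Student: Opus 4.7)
The plan is to repeat the steepest-descent argument of the preceding theorem, now with the finer scaling near the corner $A_j$ dictated by Proposition~\ref{prop:near_corner}. Starting from the integral \eqref{eq:lim_kernel} of Theorem~\ref{thm:cor_beta}, I would parametrize the arc $\mathcal{C}$ from $\bar{z}_c$ to $z_c$ by $z(\phi)=e^{\tau-|\delta|(s'+is''\phi)}$, $\phi\in[-1,1]$, where $s'\pm is''$ is the complex-conjugate pair of solutions to the scaling equation~\eqref{eq:t-scaling} (with the sign convention chosen so that $z(\pm 1)$ are the two critical points with the correct orientation). In $z$-coordinates this is a circular arc of radius $e^{\tau-s'|\delta|}$. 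Because $|\delta|=p^{\beta_j}e^{-\beta_j(\chi-\chi^{(j)})}\to 0$ as $\chi\to\infty$ with $p$ fixed, the arc contracts to $e^\tau$, and every factor of the integrand admits a Taylor expansion in $|\delta|$ uniformly in $\phi$.

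Next, I would linearize each factor:
\begin{align*}
1-e^{-\tau}z(\phi) &= |\delta|(s'+is''\phi)\bigl(1+O(|\delta|)\bigr),\\
\log z(\phi) &= \tau-|\delta|(s'+is''\phi),\\
\frac{dz}{2\pi i z} &= -\frac{|\delta|\,s''}{2\pi}\,d\phi.
\end{align*}
With $\Delta t=\Delta\eta$ and $\Delta h=e^{\beta_j\chi}\Delta\zeta$, the key identity $|\delta|\,e^{\beta_j\chi}=p^{\beta_j}e^{\beta_j\chi^{(j)}}$ keeps the $\phi$-dependent part of $z(\phi)^{-\Delta h+\Delta t/2}$ equal to the bounded exponential $\exp\bigl(is''\phi\,p^{\beta_j}e^{\beta_j\chi^{(j)}}\Delta\zeta\bigr)$. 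The remaining contributions — the factor $|\delta|^{\Delta\eta}$ from the linearization, the factor $e^{-\tau\Delta h+\tau\Delta t/2}$, the $\phi$-independent exponential $e^{s'p^{\beta_j}e^{\beta_j\chi^{(j)}}\Delta\zeta}$, and an overall power of $|s|$ — all split as ratios $g(\eta_1,\zeta_1)/g(\eta_2,\zeta_2)$, hence cancel in the determinant~\eqref{eq:det} and can be discarded before the limit is taken.

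After dividing by $e^{-\beta_j\chi}$, the surviving kernel is proportional to $\int_{-1}^{1}(s'+is''\phi)^{\Delta\eta}\exp(is''\phi\,e^{\beta_j\chi^{(j)}}\Delta\zeta)\,d\phi$ with a prefactor equal to a constant multiple of $e^{\beta_j\chi^{(j)}}s''$; using $\gamma'=-s''/|s|$ and $\sqrt{1-\gamma'^2}=|s'|/|s|$, together with a linear reparametrization $\phi\mapsto\varphi$, one identifies this with the integral defining $K^{(\gamma')}$ in the case $\eta_1\geq\eta_2$. The complementary case $\eta_1<\eta_2$ is treated identically with $\mathcal{C}$ deformed to cross the negative real axis, yielding the second branch of the piecewise formula over $\mathbb{R}\setminus[-1,1]$. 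The main technical obstacle is uniformity of the Taylor expansion on $\mathcal{C}$: since by Proposition~\ref{prop:near_corner} the solution $s(p)$ is a fixed bounded nonzero complex number, $s'+is''\phi$ ranges over a compact set bounded away from $0$, so the $O(|\delta|)$ remainders vanish uniformly in $\phi$ and the termwise passage to the limit is legitimate.
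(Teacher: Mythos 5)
The paper gives no separate argument for this theorem (it is declared ``completely parallel'' to the preceding one), and your proposal executes exactly that parallel computation: the arc through the two conjugate critical points at scale $|\delta|=p^{\beta_j}e^{-\beta_j(\chi-\chi^{(j)})}$, the uniform linearization of $1-e^{-\tau}z(\phi)$, the key identity $|\delta|\,e^{\beta_j\chi}=p^{\beta_j}e^{\beta_j\chi^{(j)}}$ keeping the $\phi$-dependent exponential bounded, the splitting of all $\phi$-independent factors into ratios $g(\eta_1,\zeta_1)/g(\eta_2,\zeta_2)$ that cancel in the determinant \eqref{eq:det}, and the complementary arc for $\eta_1<\eta_2$. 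This is the intended proof, and these analytic steps are sound.

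The one step you assert rather than verify is the final matching of constants, and as written it does not go through literally: from the surviving integrand $(s'+is''\phi)^{\eta_1-\eta_2}\,e^{\,i s''\phi\, p^{\beta_j}e^{\beta_j\chi^{(j)}}(\zeta_1-\zeta_2)}$, a real reparametrization $\varphi=\pm\phi$ together with gauge factors $|s|^{\eta_1-\eta_2}$ and $(\pm 1)^{\eta_1-\eta_2}$ brings the kernel to the canonical bead form with $\gamma'=\pm s'/|s|$ and density $p^{\beta_j}e^{\beta_j\chi^{(j)}}|s''|$; it cannot produce $\gamma'=-s''/|s|$, since that would amount to rotating the chord by $i$, i.e.\ interchanging the roles of $s'$ and $s''$. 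This mismatch, together with the factor $p^{\beta_j}$ that you correctly carried along and then absorbed into ``a constant multiple'' (it is fixed in this regime, but it does not appear in the displayed $d'$), traces to the formulas stated in the theorem rather than to your method --- note for instance that the density $p^{\beta_j}e^{\beta_j\chi^{(j)}}|s''|$ is exactly what makes $d'(\zeta_1-\zeta_2)$ match $d(\xi_1-\xi_2)$ in the subsequent $p\to\infty$ comparison with the previous theorem. In a complete write-up you should carry this last piece of bookkeeping explicitly, rather than appealing to the stated values of $\gamma'$ and $d'$, and record the resulting parameters.
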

The proof is completely parallel to the proof of the previous statement, so we will skip it.

It is easy to see that these formulae when $\delta\to\pm 0$ and $p\to \infty$ agree with the previous result when $\tau\to A_j \pm 0$. let us verify it for $\delta\to -0$, the calculations being almost identical for $\delta\to +0$. We have in the limit $\delta\rightarrow 0-$:
\[
s\sim p^{-1}e^{i\pi (\alpha_j+\frac{1}{2})}.
\]
From the definition of $p$ and a simple algebra we have 
\begin{equation*}
  ( p e^{-\chi})^{\beta_j-1}=|\delta|^\frac{\beta_j-1}{\beta_j}
e^{-(\beta_j-1)\chi^{(j)}}.
\end{equation*}

Taking into account the definition of $\Delta\xi$ and $\Delta\zeta$ we obtain:
\[
\Delta\zeta e^{\beta_j\chi^{(j)}}s'' \sim |\delta|^{\alpha_{j+1}-\alpha_j}e^{\beta_j\chi^{(j)}}\cos(\pi\alpha_j)
\Delta\xi.
\]
Moreover, when $\tau\to A_j-0$
\begin{equation*}
  \rho(\tau) \sim \rho^{(j)} |\delta|^{\alpha_{j+1}-\alpha_j} = |\delta|^{\alpha_{j+1}-\alpha_j}e^{\chi^{(j)}},
\end{equation*}
where $\delta=\tau-A_j$. Comparing the two kernels in these regimes yields:

\[
e^{-\beta_j\chi} K^{(\gamma')}((\eta_1,\zeta_1),(\eta_2,\zeta_2))\sim
e^{-\chi} K^{(\gamma)}((\eta_1,\xi_1),(\eta_2,\xi_2)).
\]

It is also easy to find the asymptotic when $t\to 0$.
In this case:
\[
\gamma'\to \sin\left(\frac{\pi}{2}\frac{\alpha_{j+1}+\alpha_j}{1+\alpha_j-\alpha_{j+1}}\right),
\]
and for the density we have:
\[
e^{\beta_j\chi^{(j)}}t^{\beta_j}s''\to e^{\beta_j\chi_j}\cos(\frac{\pi}{2}\frac{\alpha_{j+1}+\alpha_j}{1+\alpha_j-\alpha_{j+1}}).
\]

Again, as a corollary of Theorem \ref{thm:scaling} 
the statistics of the horizontal tiles, rescaled vertically by a factor $e^{-\chi\beta_j}$, converge as $\chi\to\infty$  to the bead process with parameter $\gamma=\sin\left(\frac{\pi(\alpha_{j+1}+\alpha_j)}{2(1+\alpha_j-\alpha_{j+1})}\right)$ 
and density
$\frac{1}{\pi} \left(\rho^{(j)}\right)^{\beta_j}\cos\left(\frac{\pi(\alpha_{j+1}+\alpha_j)}{2(1+\alpha_j-\alpha_{j+1})}\right)$.

Notice that in the vicinity of $\tau=A_j$  one can get a bead process with any parameter in the interval $[\sin(\pi\alpha_j), \sin(\pi \alpha_{j+1})]$ by appropriately tuning $p$.

\end{document}